\title{Certified Branch-and-Bound MaxSAT Solving\\ (Extended Version)
}
\author{
	    Dieter Vandesande\inst{1,2} \and 
	Jordi Coll\inst{ 3} \and
	Bart Bogaerts\inst{ 2,1}
}
\institute{
Vrije Universiteit Brussel, Brussels, Belgium
\and
KU Leuven, Leuven, Belgium
\and
Universitat de Girona, Girona, Spain
 }
\authorrunning{Vandesande et al}
\titlerunning{Certified Branch-and-Bound MaxSAT Solving}
\newcommand\onlyAppendix[1]{#1}
\newcommand\notInAppendix[1]{}
\newcommand{\setcounternextresult}[1]{\setcounter{theorem}{\getrefnumber{#1}-1}}
\newcommand\m[1]{\ensuremath{#1}\xspace}
\newcommand\toolname[1]{\textsc{#1}\xspace}
\newcommand\limplies{\Rightarrow}
\newcommand\limpliedby{\Leftarrow}
\newcommand\lit{\m{\ell}}
\newcommand\var{\m{x}}
\newcommand\constraint{\m{C}}
\newcommand\degree{\m{A}}
\newcommand\assignment{\m{\alpha}}
\newcommand\assignmenttwo{\m{\beta}}
\newcommand\assignmentlocal{\m{\lambda}}
\newcommand{\Cdef}[5]{\m{\constraint_{\text{def}}^{#1}(#2, #3, #4, #5)}}
\newcommand{\Creif}[2]{\m{\constraint_{\text{reif}}^{#1}(#2)}} 
\newcommand\cost{\m{v}}
\newcommand\blocklit{\m{b}} 
\newcommand\objective{\m{\mathcal{O}}}
\newcommand\node\eta
\newcommand{\bddOf}[3]{\m{\mathrm{bdd}(#1,#2,#3)}} 
\newcommand{\mddOf}[3]{\m{\mathrm{mdd}(#1,#2,#3)}} 
\newcommand{\child}[2]{\m{\mathrm{child(#1, #2)}}} 
\newcommand\olnot[1]{\m{\overline{#1}}}
\newcommand\formula{\m{F}}
\newcommand\weight{\m{w}}
\newcommand\core{\m{K}}
\newcommand\weightedLocalCore{\m{q}} 
\newcommand\coretriple{\langle\weight,\reason,\core\rangle} 
\newcommand\coreset{\m{\mathcal{Q}}}
\newcommand\clause{C}
\newcommand\weightedLocalCoreClause{\m{\clause_{\weightedLocalCore}}}
\newcommand\coresetclause{\m{\clause_{\mathcal{Q}}}}
\newcommand\reason{\m{R}}
\newcommand\residualw{\mathit{res}}
\newcommand\comparison{\m{\bowtie}}
\newcommand{\litspartition}{\m{\mathcal{X}}}
\newcommand{\amo}{\m{\mathrm{AMO}}}
\newcommand{\amoUB}{\m{\mathrm{UB}}}
\newcommand\proofconfig{\m{\mathcal{F}}}
\newcommand\obj{\m{\mathcal{O}}}
\newcommand\weightInObj[1]{\m{w_{\obj}(#1)}}
\newcommand\bestval{\m{v^*}}
\newcommand\under[1]{\m{\!\!\upharpoonright\!\!_{#1}}}
\newcommand\subst[2]{\m{#1\under#2}}
\newcommand\witness\omega
\newcommand\sub\witness
\newcommand\lneg\lnot
\newcommand\maxcdcl{\toolname{MaxCDCL}}
\newcommand\veripb{\toolname{VeriPB}}
\newcommand\sumWithLongBottomLim[2]{{\mathop{\sum\mathrlap{#2}}_{#1}}}
\newcommand\eg{e.g.,\xspace}
\newcommand\false{\bot}
\newcommand\maxsat{MaxSAT\xspace}
\newtheorem{theorem}{Theorem}
\newtheorem{proposition}[theorem]{Proposition}
\newtheorem{definition}[theorem]{Definition}
\theoremstyle{note}
\newtheorem{example}[theorem]{Example}
\newcounter{appendixtheorem}
\newtheorem{aproposition}[appendixtheorem]{Proposition}
\newtheorem{adefinition}[appendixtheorem]{Definition}
\newcommand\codecall[1]{\texttt{#1}\xspace}
\newcommand\lfalse{\mathbf{f}}
\newcommand\ltrue{\mathbf{t}}
\newcommand\lequiv{\Leftrightarrow}
\newcommand\citeEG[1]{\cite[e.g.,][]{#1}}
\newcommand\citeWithNote[2]{\cite[#1][]{#2}}
\newcommand\ignore[1]{}
\newcommand\setcitation[2]{%
	\csdef{mycommoncitation\text_uppercase:n{#1}}{#2}%
	\csappto{bbAllCommonCitations}{\cite{#2}\ }
}
\newcommand\getcitation[1]{%
	\csuse{mycommoncitation\text_uppercase:n{#1}}}
\newcommand\refto[1]{%
	\ifcsname  mycommoncitation\text_uppercase:n{#1}\endcsname%
	\getcitation{#1}%
	\else%
	#1%
	\fi%
}
\newcommand\mycite[1]{\cite{\refto{#1}}}
\renewcommand{\cite}[1]{\citep{#1}\xspace}
\providecommand\citeEG[1]{\cite{#1}\xspace}
\renewcommand\citeEG[1]{\cite{#1}\xspace}
\renewcommand\citeWithNote[2]{(#1 \citet{#2})}
\begin{document}
	
	\maketitle

	\begin{abstract}
		Over the past few decades, combinatorial solvers have seen remarkable performance improvements, enabling their practical use in real-world applications.
In some of these applications, ensuring the correctness of the solver's output is critical.
However, the complexity of modern solvers makes them susceptible to bugs in their source code.
In the domain of satisfiability checking (SAT), this issue has been addressed through \emph{proof logging}, where the solver generates a formal proof of the correctness of its answer.
For more expressive problems like MaxSAT, the optimization variant of SAT, proof logging had not seen a comparable breakthrough until recently.

In this paper, we show how to achieve proof logging for state-of-the-art techniques in Branch-and-Bound MaxSAT solving. 
This includes certifying look-ahead methods used in such algorithms as well as advanced clausal encodings of pseudo-Boolean constraints based on so-called Multi-Valued Decision Diagrams (MDDs).
We implement these ideas in MaxCDCL, the dominant branch-and-bound solver, and experimentally demonstrate that proof logging is feasible with limited overhead, while proof checking remains a challenge. 

\onlyAppendix{ 
	This is an extended version of a paper that will be published in the proceedings of AAAI 2026 \cite{VCB26CertifiedBranch-and-BoundMaxSATSolving}. 
	It extends the published version with a technical appendix including proofs and more details. }

	\end{abstract}
	
	
	
	\section{Introduction}\label{sec:introduction}\label{sec:intro}
With increasingly efficient solvers being developed across various domains of combinatorial search and optimization, we have effectively reached a point where NP-hard problems are routinely addressed in practice.
This maturation has led to the deployment of solvers in a wide range of real-world applications, including safety-critical systems and life-impacting decision-making processes—such as 
verifying software for transportation infrastructure \cite{FS04AutomaticVerificationSafetyRulesSubwayControl}, 
checking the correctness of plans for the space shuttle's reaction control system \cite{NBGWB01A-PrologDecisionSupportSystemSpaceShuttle}, 
and matching donors with recipients in kidney transplants \cite{MO14PairedAltruisticKidneyDonationUKAlgorithms}.
Given these high-stakes applications, it is crucial that the results produced by such solvers are guaranteed to be correct.
Unfortunately, this is not always the case: numerous reports have documented solvers producing infeasible solutions or incorrectly claiming optimality or unsatisfiability \cite{BBNOV23CertifiedCore-GuidedMaxSATSolving,BLB10AutomatedTestingDebuggingSATQBFSolvers,CKSW13hybridbranch-and-boundapproachexactrationalmixed-integer,GSD19SolverCheckDeclarativeTestingConstraints,JHB12InprocessingRules}.

This problem calls for a systematic solution.
The most straightforward approach would be to use \emph{formal verification}; that is, employing a proof assistant \cite{BC04InteractiveTheoremProvingProgramDevelopment-,NPW02IsabelleHOL-ProofAssistantHigher-OrderLogic,SN08BriefOverviewHOL4,dU21Lean4TheoremProverProgrammingLanguage} to formally prove the correctness of a solver.
However, in practice, formal verification often comes at the cost of performance \cite{phd/Fleury20,url:satcomp2022}, which is precisely what has driven the success of combinatorial optimization.

Instead, we advocate for the use of \emph{certifying algorithms} \cite{MMNS11Certifyingalgorithms}, an approach also known as \emph{proof logging} in the context of combinatorial optimization.
With proof logging, a solver not only produces an answer (\eg an optimal solution to an optimization problem) but also a \emph{proof of correctness} for that answer. 
This proof can be checked efficiently (in terms of the proof size)  by an independent tool known as a \emph{proof checker}.
Beyond ensuring correctness, proof logging also serves as a powerful software development methodology: it supports advanced \emph{testing} and \emph{debugging} of solver implementations \cite{BB09FuzzingDelta-DebuggingSMTSolvers}.
Moreover, the generated proofs constitute an auditable trail and can be used to speed-up the generation of explanations of why a particular conclusion was reached \cite{BFDBG26UsingCertifyingConstraintSolversGeneratingStep-wise}.

Proof logging was pioneered in the field of Boolean satisfiability (SAT), where numerous proof formats and proof checkers, including formally verified ones, have been developed over the years \cite{url:tracecheck,GN03VerificationProofsUnsatisfiabilityCNFFormulas,HHW13Trimmingwhilecheckingclausalproofs,WHH14DRAT-trimEfficientCheckingTrimmingUsingExpressive,CHHKS17EfficientCertifiedRATVerification}.
For several years, proof logging has been mandatory in the main tracks of the annual SAT solving competition, reflecting the community's strong commitment to ensuring that all SAT solvers are certifying.

In this paper, we are concerned with \emph{maximum satisfiability} (\maxsat), the optimization variant of SAT. 
In contrast to SAT, proof logging in \maxsat is still relatively uncommon.
While several proof systems for \maxsat have been developed 
\cite{%
	BLM07ResolutionMax-SAT,LNOR11FrameworkCertifiedBooleanBranch-and-BoundOptimization,MM11ValidatingBooleanOptimizers,PCH20TowardsBridgingGapBetweenSATMax-SAT}
and a solver has been designed specifically to generate such proofs \cite{PCH22ProofsCertificatesMax-SAT}, only with the recent advent of the \veripb proof system~\cite{BGMN23CertifiedDominanceSymmetryBreakingCombinatorialOptimisation,GN21CertifyingParityReasoningEfficientlyUsingPseudo-Boolean} did a general-purpose proof logging methodology for \maxsat see the light \cite{VDB22QMaxSATpbCertifiedMaxSATSolver}.
\veripb, a proof system for linear inequalities over 0--1 integer variables, has since been successfully applied to \maxsat solvers based on \emph{solution-improving search} \cite{VDB22QMaxSATpbCertifiedMaxSATSolver,msc/Vandesande23,BBNOPV24CertifyingWithoutLossGeneralityReasoningSolution-Improving} (where a SAT oracle is repeatedly queried to find solutions that improve upon the current best), as well as to solvers using \emph{core-guided search} \cite{BBNOV23CertifiedCore-GuidedMaxSATSolving} (where a SAT oracle is queried iteratively under increasingly relaxed optimistic assumptions).
The techniques developed there have also led to certification of generalizations of MaxSAT for multi-objective problems \cite{JBBJ25CertifyingParetoOptimalityMulti-ObjectiveMaximumSatisfiability}.

In addition to solution-improving and ore-guided search, state-of-the-art \maxsat solvers also employ other strategies, such as \emph{implicit hitting set} and \emph{branch-and-bound} search.
In a companion paper \cite{IVSBBJ26EfficientReliableHitting-SetComputationsImplicitHitting}, we show how to certify IHS search, and in the current paper, 
we develop \veripb-based certification for branch-and-bound search \cite{LM21MaxSATHardSoftConstraints, LXCMHH22Boostingbranch-and-boundMaxSATsolversclauselearning, jsat/AbrameH14, sat/Kugel10, LMP07NewInferenceRulesMax-SAT, jair/HerasLO08}, thereby finally covering all key search paradigm in \maxsat solving.
Modern branch-and-bound solvers combine conflict-driven clause learning (CDCL) \mycite{CDCL} with a sophisticated bounding function that determines whether
it is the case that
no assignment refining the current node can improve upon the best solution found so far \cite{LXCMHH22Boostingbranch-and-boundMaxSATsolversclauselearning, LXCMHH21CombiningClauseLearningBranchBoundMaxSAT, COLL2025107195}. 
While it is well understood how to certify CDCL search using \veripb, the main challenge lies in certifying the conclusions drawn by the bounding function.
This bounding function employs look-ahead methods to generate (conditional) unsatisfiable cores: sets of literals that cannot simultaneously be true.
These cores are then combined to estimate the best possible objective value that remains achievable.

In order to bring this certification to practice, there are more hurdles to overcome: state-of-the-art branch-and-bound solvers employ several other clever tricks to speed up the solving process, such as pre-processing methods as well as integrating ideas from other search paradigms. 
One particular technique that proved to be challenging is the use of \emph{multi-valued decision diagrams} (MDDs) in order to create a CNF encoding of a solution-improving constraint (which is enabled or disabled heuristically depending  on the size of the instance at hand) \cite{BCSV20MDD-basedSATencodingpseudo-Booleanconstraintsat-most-one}. 
This encoding generalizes the encoding of \citet{ANORM12NewLookBDDsPseudo-BooleanConstraints} based on Binary Decision Diagrams (BDDs) by allowing splits on sets of variables instead of a single variable.
While MDDs have been used in a lazy clause generation setting \cite{GSS11MDDpropagatorsexplanation}, their clausal encoding has never been certified.  
From the perspective of proof logging, one main challenge with BDD- or MDD-based encodings is that multiple (equivalent) constraints are represented by the same variable in this encoding. 
In the proof, we then have to show that they are indeed  equivalent. 
As an example, consider the constraints 
\begin{align*}
	12x_1 + 5 x_2 + 4x_3 & \geq 6\text{\qquad and} \\
	12x_1 + 5 x_2 + 4x_3 &\geq 9. 
\end{align*}
Taking into account that the variables $x_i$ take values in $\{0,1\}$, it is not hard to see that these constraints are equivalent: 
all combinations of truth assignments that lead to the left-hand side taking a value at least six, must assign it a value of at least nine. 
In other words: the left-hand side cannot take values $6$, $7$, or $8$. 
In general, checking whether such a linear expression can take a specific value (e.g., $7$) is well-known to be NP hard, but BDD (and MDD) construction algorithms will in several cases detect this efficiently. 
The main question of interest for us is how to convince a proof checker of the fact that these two constraints are equivalent, without doing a substantial amount of additional work. 
We achieve this using an algorithm that proves this property for all nodes in an MDD in a linear pass over its representation.

To evaluate our approach, we add proof logging to the branch-and-bound solver WMaxCDCL~\cite{COLL2025107195}. WMaxCDCL is the current state-of-the-art in branch-and-bound MaxSAT solving, which is showcased by the fact that it won (as part of a portfolio-solver) the unweighted track of the 2024 edition and the weighted track of the 2023 edition \cite{url:MSE23, url:MSE24}.
In the experiments, we focus on evaluating the overhead in the solving time, while also measuring the time necessary to check the produced proofs; the results show that proof logging is indeed possible without significant overhead for most cases, while proof checking overhead can be improved.


The rest of this paper is structured as follows. 
In \cref{sec:prelims}, we recall some preliminaries about \maxsat solving and \veripb-based proof logging. 
\cref{sec:maxcdcl} is devoted to presenting the core of the \maxcdcl algorithm, focusing on look-ahead--based bounding, as well as explaining how to get a certifying version of this. 
In \cref{sec:MDD}, we explain how BDDs and MDDs are used to encode PB constraints and how this can be made certifying. 
\cref{sec:experiments} contains our experimental analysis and \cref{sec:concl} concludes the paper. 

Formal details and proofs are 
\onlyAppendix{can be found in the appendices of this document.
 First, the appendix it contains a short discussion on how to use the redundance-based strengthening rule for proving reification and proofs by contradiction. Next, for Section~\ref{sec:maxcdcl} on Certifying Branch-and-Bound with Look-Ahead, it includes the proof for Theorem~\ref{thm:proofmain}, together with a more detailed example that shows the look-ahead call that leads to the discovery of the local cores for which clause $\coresetclause$ are derived. For Section~\ref{sec:mdd} on Certifying CNF Encodings based on Binary (and Multi-Valued) Decision Diagrams, it contains more information on the MDD encoding, together with all the formal details on how to certify MDD-based encodings for the solution-improving constraint. We also include the proofs that were omitted in the paper due to space restrictions. Importantly, the proofs presented in this technical appendix are of constructive nature on purpose: it shows exactly how to perform the derivation in the VeriPB proof system, showing how it is implemented in the solver.}
\notInAppendix{often omitted, but can be found in the extended version of this paper \cite{vandesande2025certifiedbranchandboundmaxsatsolving}.}


	\section{Preliminaries}\label{sec:prelims}
	We first recall some concepts from pseudo-Boolean optimization and \maxsat solving. 
Afterwards, we introduce the \veripb proof system. 
A more complete exposition can be found in previous work  \cite{\refto{Maxsat},BGMN23CertifiedDominanceSymmetryBreakingCombinatorialOptimisation}.

\subsection{Pseudo-Boolean Optimization and MaxSAT}

In this paper, all variables are assumed to be \emph{Boolean}; meaning they take a value in  $\{0,1\}$. 
A \emph{literal} $\lit$ is a Boolean variable \var or its negation $\olnot{\var}$. 
A \emph{pseudo-Boolean (PB) constraint} \constraint is a $0$--$1$ integer linear inequality
$ \sum_i w_i \lit_i \comparison \degree$, with $w_i, \degree$ integers, $\lit_i$ literals and $\comparison \in \{\geq, \leq\}$.  
Without loss of generality, we will often assume our constraints to be \emph{normalized}, meaning that the $\lit_i$ are different literals, the comparison symbol $\comparison$ is equal to $\geq$ and all coefficients $w_i$ and the degree \degree are non-negative. 
A \emph{formula} is a conjunction of PB constraints. 
A \emph{clause} is a special case of a PB constraint having all $w_i$ and \degree equal to one. 

A \emph{(partial) assignment} $\assignment$ is a (partial) function from the set of variables to $\{0, 1\}$; it is extended to literals in the obvious way, and we sometimes identify an assignment with the set of literals it maps to $1$.
We write \mbox{$\constraint\under\assignment$} for the constraint obtained from $\constraint$  by substituting all assigned variables $\var$ by their assigned value $\assignment(\var)$. 
A constraint $\constraint$ is \emph{satisfied} under $\assignment$ if 
$\sum_{\assignment(\lit_i) =  1} w_i \geq \degree$. 
A formula $\formula$ is satisfied under $\assignment$ if all of its constraints are. 
We say that $\formula$ \emph{implies} $\constraint$ (and write $\formula\models\constraint$) if all assignments that satisfy $\formula$ also satisfy $\constraint$.  
A constraint $\constraint$ \emph{(unit) propagates} a literal $x$ to a value in $\{0,1\}$ under a (partial) assignment $\assignment$ if assigning $x$ the opposite value makes the constraint $\constraint\under\assignment$ unsatisfiable. 

A \emph{pseudo-Boolean optimization} instance consists of a formula $\formula$ and a linear term $\objective = \sum_i \cost_i \blocklit_i$ (called the \emph{objective}) to be minimized, where $\cost_i$ are integers (without loss of generality assumed to be \emph{positive} here) and $\blocklit_i$ are different literals, which we will refer to as \emph{objective literals}. 
If $\lit$ is a literal, we write $\weightInObj{\lit}$ for the weight of $\lit$ in $\objective$, i.e., $\weightInObj{\blocklit_i}=\cost_i$ for all $i$ and $\weightInObj{\lit}=0$ for all other literals.
In line with previous work on certifying MaxSAT solvers \citeEG{BBNOV23CertifiedCore-GuidedMaxSATSolving,msc/Vandesande23,BBNOPV24CertifyingWithoutLossGeneralityReasoningSolution-Improving}, in this paper, we view \maxsat as the special case of pseudo-Boolean optimization where $\formula$ is a conjunction of clauses. A discussion on the equivalence of this view and the more clause-centric view using hard and soft clauses is given by for example \citet{ecai/LeivoBJ20}.

\subsection{The \veripb Proof System}

For a pseudo-Boolean optimization instance $(\formula,\objective)$, the \veripb proof system \cite{BGMN23CertifiedDominanceSymmetryBreakingCombinatorialOptimisation,GN21CertifyingParityReasoningEfficientlyUsingPseudo-Boolean} maintains a \emph{proof configuration} $\proofconfig$, which is a set of constraints derived so far, and is initialized with $\formula$.
We are allowed to update the configuration using the cutting planes proof system \mycite{CuttingPlanes}:
\begin{description}
	\item [Literal Axioms:] For any literal, we can add  $\lit_{i} \geq 0$ to $\proofconfig$.
	\item [Linear Combination:] Given two PB constraints $C_1$ and $C_2$ in $\proofconfig$, we can add a positive linear combination of $C_1$ and $C_2$ to $\proofconfig$.
	\item [Division:] Given the normalized PB constraint $ \sum_i w_i \lit_i \geq \degree$ in $\proofconfig$ and a positive integer $c$, we can add the constraint $ \sum_i \lceil w_i/c \rceil \lit_i \geq \lceil
	\degree/c \rceil$ to $\proofconfig$.
	\item [Saturation:]  Given the normalized PB constraint $ \sum_i a_i l_i \geq A$ in $\proofconfig$, we can add $\sum_i b_i l_i \geq A$ with $b_i=\min(a_i,A)$ for each $i$, to $\proofconfig$.
\end{description}
 Importantly, in many cases, it is not required to show precisely why a constraint can be derived, but the verifier can figure it out itself by means of so-called \textbf{reverse unit propagation (RUP)}
\cite{GN03VerificationProofsUnsatisfiabilityCNFFormulas}:  if applying unit propagation until fixpoint on $\proofconfig \land \lnot \constraint$ propagates to a contradiction, then we can add $\constraint$ to $\proofconfig$. 


\veripb has some additional rules, which are briefly discussed below. We refer to earlier work~\cite{BGMN23CertifiedDominanceSymmetryBreakingCombinatorialOptimisation} for more details.
There is a rule for dealing with optimization statements:
\begin{description}
	\item [Objective Improvement:] Given an assignment $\assignment$ that satisfies $\proofconfig$, we can add the constraint $\objective < \objective \under \assignment $ to $\proofconfig$.
\end{description}
The constraint added to $\proofconfig$ is also called the solution-improving constraint; if $0\geq 1$ can be derived from $\proofconfig$ after applying the objective improvement rule, then it is concluded that $\alpha$ is an optimal solution.

\veripb allows deriving non-implied constraints with the \emph{redundance-based strengthening rule}:
a generalization of the RAT rule~\cite{JHB12InprocessingRules} and commonly used in proof systems for SAT. 
In this paper, the redundance rule is mainly used for \textbf{reification}: for any PB constraint $C$ and any fresh variable $v$, not used before, two applications of the redundance rule allow us to derive PB constraints that express $v\limplies C$ and $v\limpliedby C$. 
Moreover, the redundance rule allows us to derive implied constraints by a \textbf{proof by contradiction}: if we have a cutting planes derivation that shows that $\proofconfig \land \lneg \constraint \models 0 \geq 1$, then we can add $\constraint$ to $\proofconfig$.\footnote{In practice, this is mimicked by applying the \emph{redundance-based strengthening rule} with an empty witness; details can be found in the supplementary material.} 

	\section{Certifying Branch-And-Bound with Look-Ahead} \label{sec:maxcdcl}
	In this section, we present the working of modern branch-and-bound search for MaxSAT, specifically for the \maxcdcl algorithm. 
Our presentation is given with proof logging in mind (we immediately explain what is essential for each step from the proof logging perspective) and hence we often take a different perspective from the original papers introducing this algorithm \cite{LXCMHH21CombiningClauseLearningBranchBoundMaxSAT, COLL2025107195}.

\maxcdcl combines branch-and-bound search and conflict-driven clause learning (CDCL). 
An overview of the algorithm is given in Algorithm~\ref{alg:maxcdcl}. 
As can be seen, this essentially performs CDCL search (this is the \emph{branching} aspect), using standard techniques such as assigning variables, unit propagation, conflict analysis, and non-chronological backtracking \mycite{CDCL}. 

\maxcdcl maintains the objective value $\bestval$ of the best found solution so far, which is initialized as $+\infty$. 
However, after unit propagation the \texttt{lookahead} procedure is called. 
This procedure is where the \emph{bounding} aspect happens: here sophisticated techniques are used to determine whether the current assignment can still be extended to a satisfying assignment that improves upon the current best objective value. If not, the search is interrupted and a new clause forcing the solver to backtrack is learned.

It is well known how to get 
\veripb-based certification for CDCL \citeWithNote{this is done for instance in a MaxSAT setting by}{VDB22QMaxSATpbCertifiedMaxSATSolver},
so we focus here only on how to get proof logging for the \codecall{lookahead} method.\footnote{Obviously, our implementation with proof logging also handles the other aspects of the algorithm.} 

\begin{algorithm}
	\caption{Overview of the  \maxcdcl algorithm for branch-and-bound \maxsat solving. }\label{alg:maxcdcl}
	\KwIn{CNF formula $\formula$, objective $\objective$}
	\KwOut{optimal assigment or \textbf{UNSAT}}
$\formula \gets$ \codecall{Preprocess}($\formula$)\;
$\assignment \gets \emptyset$;\quad $\assignment^*\gets \textbf{UNSAT}$; \quad $\bestval\gets +\infty$\;
\While{true}{
	\codecall{unit-propagate}($\formula$, $\assignment$)\;
	  \hl{\mbox{\codecall{lookahead}($\formula$, $\objective$, $\bestval$, $\assignment$)}}\;
	\If{ conflict detected }{
		\If{ at root level}{
			\Return $\assignment^*$ \label{line:returnmaxcdcl}
		}
		Analyze conflict and backtrack\; 
	}
	\ElseIf{all variables assigned}{
		\lIf{$\objective\under\assignment < \bestval$ }{$\assignment^*\gets \assignment$;\quad $\bestval \gets \objective\under\assignment$\label{line:solution}}
		Backtrack 
	}
	\Else{
		Decide on some unassigned variable\; 
	}
}
\end{algorithm}

Let us now focus on the \codecall{lookahead} procedure. 
Before diving into the algorithmic aspects, we formalize what it aims to compute, namely a set of \emph{weighted local cores}. 
\begin{definition}
	Let $\assignment$ be an assignment, and $(\formula,\objective)$ a \maxsat instance. 
	A \emph{weighted local core} of $(\formula,\objective)$ relative to $\assignment$ is a triple 
	\[ \weightedLocalCore = \coretriple\]
	such that $\reason\subseteq \alpha$ and $\core$ contains only \emph{negations of} objective literals and
	\[F\land \reason \land \core \models \false,\]
	where $\false$ denotes the trivially false constraint $0\geq 1$. 
\end{definition}
In other words, a local core guarantees that, given the assignments in $\reason$ (which are also called the reasons of $\core$), the objective literals in $\core$ cannot all be false, and hence at least one of the objective literals in $\core$ has to incur cost. 

Now, we are not just interested in finding a single local core, but in a compatible \emph{set} of such cores. 
\begin{definition}
	Let $\coreset$ be a set of weighted local cores of $(\formula,\objective)$  relative to $\alpha$. 
	We call $\coreset$ $\objective$-compatible, if for each objective literal $\lit$, 
	\[\sumWithLongBottomLim{\coretriple\in\coreset \land \olnot \lit\in\core} \weight \leq \weightInObj{\lit}.\] 
\end{definition}
In other words, $\coreset$ is  $\objective$-compatible if the total weight of all cores containing a literal does not exceed the weight of the underlying objective literal in the objective.  
In what follows, if $\lit$ is an objective literal, we write $\residualw(\lit, \coreset)$ for 
\[\weightInObj{\lit} - \sumWithLongBottomLim{\coretriple\in\coreset \land \olnot \lit\in\core} \weight\] and call this the \emph{residual} weight of $\lit$ with respect to $\coreset$. The total \emph{weight} of a core set, written $\weightInObj{\coreset}$ is $\sum_{\coretriple\in\coreset}\weight$. 
The following proposition now formalizes how a set of weighted local cores can be used during search. 
\begin{proposition}\label{prop:correct} 
	Let $\coreset$ be an $\objective$-compatible set of weighted local cores and assume $\assignmenttwo$ is any total assignments that refines $\assignment$ and satisfies $\formula$. 
	\begin{enumerate}
		\item If $\weightInObj{\coreset} \geq \bestval$, then $\objective\under \assignmenttwo \geq \bestval$. 
		\item If for some unassigned objective literal $\lit$, 
		\[ \residualw{(\lit,\coreset)} + \weightInObj{\coreset}  \geq \bestval,  \]
		then it holds that if $\objective\under \assignmenttwo < \bestval$, then $\assignmenttwo(\lit)=0$. 
	\end{enumerate}
\end{proposition}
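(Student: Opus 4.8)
The plan is to exploit the defining property of weighted local cores together with $\objective$-compatibility to lower-bound $\objective\under\assignmenttwo$. First I would fix an arbitrary total assignment $\assignmenttwo$ refining $\assignment$ with $\assignmenttwo\models\formula$, and consider any single weighted local core $\weightedLocalCore=\coretriple\in\coreset$. Since $\reason\subseteq\assignment\subseteq\assignmenttwo$, the assignment $\assignmenttwo$ satisfies $\reason$; and since $\assignmenttwo\models\formula$, the entailment $\formula\land\reason\land\core\models\false$ forces $\assignmenttwo$ to \emph{not} satisfy $\core$. As $\core$ consists solely of negations of objective literals, this means there is at least one objective literal $\lit$ with $\olnot\lit\in\core$ and $\assignmenttwo(\olnot\lit)=0$, i.e.\ $\assignmenttwo(\lit)=1$, so $\lit$ contributes its full weight $\weightInObj{\lit}$ to $\objective\under\assignmenttwo$. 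Thus every core in $\coreset$ ``witnesses'' a paid objective literal.

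For part (1), the key step is a counting argument: charge each core $\weightedLocalCore$ of weight $\weight$ to one of the objective literals it forces to be paid. Formally, for each objective literal $\lit$ that is set to $1$ by $\assignmenttwo$, the cores in $\coreset$ containing $\olnot\lit$ have total weight $\sum_{\coretriple\in\coreset\land\olnot\lit\in\core}\weight\le\weightInObj{\lit}$ by $\objective$-compatibility. Summing over all objective literals $\lit$ with $\assignmenttwo(\lit)=1$, and using that each core forces at least one such literal (so contributes its weight to at least one term of the sum on the left), we get
\[
\weightInObj{\coreset}=\sum_{\coretriple\in\coreset}\weight \;\le\; \sumWithLongBottomLim{\lit\ \text{obj.\ literal}}{\assignmenttwo(\lit)=1}\ \sumWithLongBottomLim{\coretriple\in\coreset}{\olnot\lit\in\core}\weight \;\le\; \sumWithLongBottomLim{\lit\ \text{obj.\ literal}}{\assignmenttwo(\lit)=1}\weightInObj{\lit}=\objective\under\assignmenttwo,
\]
where the first inequality holds because each core is counted at least once on the right (pick one forced literal per core), and the last equality is just the definition of $\objective\under\assignmenttwo$. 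Combining with the hypothesis $\weightInObj{\coreset}\ge\bestval$ yields $\objective\under\assignmenttwo\ge\bestval$. I expect the only delicate point here is making the double-counting inequality rigorous: one must choose, for each core, a single forced objective literal to charge it to (the axiom of choice / a well-ordering is overkill — finiteness suffices), and then observe that the map from cores to charged literals, grouped by target literal, gives a subsum of the inner sum.

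For part (2), I would combine part (1)'s accounting with the extra unassigned objective literal $\lit$. Suppose $\assignmenttwo(\lit)=1$; I want to derive $\objective\under\assignmenttwo\ge\bestval$, which contradicts $\objective\under\assignmenttwo<\bestval$. Redo the counting of part (1) but separate out the literal $\lit$: the cores containing $\olnot\lit$ contribute at most $\weightInObj{\lit}-\residualw(\lit,\coreset)=\sum_{\coretriple\in\coreset\land\olnot\lit\in\core}\weight$ toward $\lit$'s paid weight, while \emph{all} cores (including those already forcing $\lit$) still each force some objective literal to be paid. The cleanest way is: $\objective\under\assignmenttwo \ge \weightInObj{\lit} + \sum_{\lit'\ne\lit,\ \assignmenttwo(\lit')=1}\weightInObj{\lit'} \ge \weightInObj{\lit} + \bigl(\weightInObj{\coreset} - \sum_{\coretriple\in\coreset\land\olnot\lit\in\core}\weight\bigr) = \residualw(\lit,\coreset)+\weightInObj{\coreset}$, where the middle inequality reruns the part-(1) bound on the sum over $\lit'\ne\lit$ after discarding from $\coreset$ exactly the weight of cores that can only be charged to $\lit$. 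Since $\residualw(\lit,\coreset)+\weightInObj{\coreset}\ge\bestval$ by hypothesis, this gives $\objective\under\assignmenttwo\ge\bestval$, the desired contradiction, so $\assignmenttwo(\lit)=0$. The main obstacle in part (2) is the bookkeeping when a core simultaneously contains $\olnot\lit$ and could force another literal too — one has to be careful to charge it consistently so as not to double count, but since we are only proving a \emph{lower} bound it suffices to charge every core containing $\olnot\lit$ to $\lit$ and every other core to some literal $\lit'\ne\lit$ it forces (which exists because $\assignmenttwo$ falsifies the core but does not falsify $\olnot\lit$).
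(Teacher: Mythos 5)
Your proof is correct. The paper states Proposition~\ref{prop:correct} without giving a semantic proof; the closest thing it offers is the cutting-planes derivation in Theorem~\ref{thm:proofmain}, which multiplies each core clause $\weightedLocalCoreClause$ by its weight, sums, and cancels against the solution-improving constraint weakened by residual-weight multiples of literal axioms. Your charging argument is exactly the semantic counterpart of that linear combination: picking, for each core, one objective literal it forces to be paid plays the role of the weighted sum of core clauses, and $\objective$-compatibility is precisely what lets the per-literal charges be absorbed into $\weightInObj{\lit}$, so the two approaches buy the same thing (yours is a direct model-theoretic argument, the paper's is a syntactic derivation that additionally serves as the proof-logging certificate). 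Two minor points, neither of which breaks the argument: first, the over-counting inequality $\sum_{\coretriple\in\coreset}\weight\le\sum_{\lit}\sum_{\coretriple\in\coreset\land\olnot\lit\in\core}\weight$ needs the weights $\weight$ to be non-negative, which holds for the cores produced by Algorithm~\ref{alg:lookahead} but is not stated in the definition and is worth making explicit; second, in part (2) your parenthetical justification that a core not containing $\olnot\lit$ forces some $\lit'\neq\lit$ is garbled --- under the assumption $\assignmenttwo(\lit)=1$ the assignment \emph{does} falsify $\olnot\lit$; the correct (and sufficient) reason is simply that $\olnot\lit\notin\core$ for those cores, so whichever literal of $\core$ is falsified must be some $\olnot{\lit'}$ with $\lit'\neq\lit$.
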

The first case of this proposition is known as a \emph{soft conflict} \cite{CLLHM25SolvingweightedMaximumSatisfiabilityBranchBound}.
This means that if the total weight of the core set exceeds $\bestval$, then we are sure that all assignments that refine the current one will have an objective value that does not improve upon the best found so far. 
In this case, we know that the current node of the search tree is hopeless and we can backtrack (in fact, \maxcdcl will learn a clause explaining this soft conflict, as discussed later). 
The second case in Proposition~\ref{prop:correct} was called \emph{hardening} \cite{CLLHM25SolvingweightedMaximumSatisfiabilityBranchBound}. 
The setting captured here is that \emph{if} we were to set a literal $\lit$ to be cost-incurring, then the condition in the first item would trigger. As such, we can safely propagate $\lit$ to be non-cost-incurring. 

Pseudo-code of the lookahead procedure, and details as to how it finds those local cores can be found in Algorithm~\ref{alg:lookahead}. 
It initializes a local assignment $\assignmentlocal$ to be equal to $\alpha$,
and maintains a partially constructed core $\core$, initialized as the empty set.  
In line~\ref{line:coreset} it initializes $\coreset$ to be the set of all trivial weighted local cores (by going over all objective literals that are already cost-incurring in $\alpha$). 
In the main loop at line~\ref{line:mainloop}, it iteratively assigns an objective literal to be non cost-incurring and runs unit propagation. 
If this results in another objective literal to be propagated to be cost-incurring (line~\ref{line:propagated}) or if a conflict is found (line~\ref{line:conflictfound}) , this means we have found a local core. 
In principle taking $R$ equal to $\assignment$ and $\core$ as constructed so far could be used for a local core. 
Instead, however, line~\ref{line:improve} uses standard conflict analysis methods to further reduce this. 
The local core that is found in this way is added to $\coreset$ (line~\ref{line:add}) and $\core$ and $\assignmentlocal$ are reset (line~\ref{line:reset}).

If one of the two conditions of \cref{prop:correct}  is satisfied, a new clause is learned (based on the set of local cores and standard conflict analysis techniques) and added to $\formula$ (lines~\ref{line:startclauseadd}--\ref{line:endclauseadd}). 
Here, we use the notation
\[
\coresetclause := \bigvee_{\coretriple \in \coreset\land \lit\in\reason}\olnot \lit.
\]

\begin{algorithm}
	\caption{Overview of the  \codecall{lookahead} procedure.  }\label{alg:lookahead}
	\KwIn{Formula $\formula$, objective $\objective$, upper bound $\bestval$, assignment $\assignment$}
  $\assignmentlocal \gets \assignment$;\quad $\coreset\gets\emptyset$\;   $K \gets \emptyset$\;
  \For{each objective literal $\lit$ s.t.\ $\assignment(\lit) = 1$\label{line:startcorediscovery}}{
  	$\coreset \gets \coreset \cup \{\langle \weightInObj{\lit}, \{\lit\}, \{\olnot\lit\}\rangle\}$\label{line:coreset} 
  }
	\While{some $\lit$ with $\residualw(\lit,\coreset) >0$  is unassigned\label{line:mainloop}}{
		$\assignmentlocal(\lit)\gets 0$;\quad $\core \gets \core \cup\{\olnot\lit\}$\;
		\codecall{unit-propagate}($\formula$, $\assignmentlocal$)\;
		\If{some $\lit'$ with $\residualw(\lit',\coreset) >0$ is propagated\label{line:propagated}}
		{ $\core \gets \core \cup\{\olnot\lit'\}$\;}
		\ElseIf{no conflict was derived\label{line:conflictfound}}{
			\textbf{continue}}
		$\reason ,\core \gets \codecall{improve-core}(\core) $\label{line:improve}\;
		$\weight \gets \min_{\olnot\ell \in K}\residualw(\ell,\coreset)$\; 
		$\coreset \gets \coreset\cup\{\coretriple\}$\label{line:add}\;
		\lIf{$\weightInObj{\coreset} \geq \bestval$}{\textbf{break}}
	 $\core\gets\emptyset$;\quad $\assignmentlocal\gets \assignment$		\label{line:reset}
	}
	\If{$\weightInObj{\coreset} \geq\bestval $}{ \label{line:startclauseadd}
		$\formula\gets \formula \cup \{ \codecall{analyze}( \coresetclause)\} $\label{line:addclause-softconflict}\;
	}
	\Else{\For{each $\lit$ with $\residualw(\lit,\coreset) + \weightInObj{\coreset} \geq\bestval$ }{
	   $\formula\gets \formula \cup \{ \codecall{analyze}(  \coresetclause\vee \olnot\lit) \} $ \label{line:endclauseadd} \label{line:addclause-hardening}\;
		\codecall{unit-propagate}($\formula$, $\assignment$)\;
	}}
\end{algorithm}

We now turn our attention to proof logging for this lookahead procedure. 
It consists of two important components. 
On the one hand, we need proofs for the \emph{core discovery} (lines~\ref{line:startcorediscovery}--\ref{line:reset}) and on the other hand, the proof that the clause added to $\formula$ can indeed be derived.
We now show that this can indeed be done efficiently. 

\begin{proposition}
	\label{prop:localcoreclauseRUP}
	If $\coretriple$ is a local core found by Algorithm~\ref{alg:lookahead}, then the clause 
	\[\weightedLocalCoreClause := \bigvee_{\lit\in\reason\cup\core}\olnot \lit.\]
	is RUP wrt $\formula$. 
\end{proposition}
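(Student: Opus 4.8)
The plan is to show that unit propagation on $\formula \land \lnot \weightedLocalCoreClause$ derives a contradiction, by tracing exactly the propagations that Algorithm~\ref{alg:lookahead} performed when it discovered the core $\weightedLocalCore = \coretriple$. Recall $\lnot \weightedLocalCoreClause$ is the conjunction of unit clauses $\lit$ for each $\lit \in \reason \cup \core$; adding these to $\proofconfig$ (which contains $\formula$) means all literals in $\reason$ and all literals in $\core$ are set to true. First I would observe that setting the literals of $\reason$ amounts to (a superset of the relevant part of) the local assignment $\assignmentlocal$ at the moment the core was found: by definition of a weighted local core, $\reason \subseteq \assignment$, and $\assignmentlocal$ was obtained from $\assignment$ by additionally setting each objective literal whose negation lies in $\core$ to be non-cost-incurring — i.e.\ exactly setting the literals of $\core$ to true. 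So the assignment forced by $\lnot\weightedLocalCoreClause$ contains everything that was decided/assumed along the branch that produced the conflict.

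Next I would argue that unit propagation on $\formula$ under this assignment replays the \codecall{unit-propagate} calls inside the \texttt{while} loop of Algorithm~\ref{alg:lookahead}. The core was closed out in one of two ways (lines~\ref{line:propagated} and~\ref{line:conflictfound}): either a conflict was derived directly by unit propagation from $\formula$ together with $\reason \cup \core$, in which case RUP immediately succeeds; or some objective literal $\lit'$ with positive residual weight was propagated to be cost-incurring, i.e.\ $\formula \land \reason \land \core \entailsUP \lit'$. In the latter case we have $\olnot{\lit'} \in \core$ as well (line following~\ref{line:propagated} adds $\olnot\lit'$ to $\core$), so the unit clause $\olnot{\lit'}$ is also among the negated literals of $\weightedLocalCoreClause$; hence unit propagation sets $\lit'$ true (from propagation) and $\olnot{\lit'}$ true (from $\lnot\weightedLocalCoreClause$), a contradiction. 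Either way, $\proofconfig \land \lnot\weightedLocalCoreClause \entailsUP \false$, which is precisely the RUP condition.

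One subtlety to handle carefully is the call to \codecall{improve-core} on line~\ref{line:improve}, which replaces the naive choice $\reason = \assignment$, $\core = $ (the literals assumed so far) by a smaller $\reason$ and possibly adjusted $\core$ via standard conflict analysis. I would note that conflict analysis only ever produces a \emph{reason set that is still sufficient to entail the conflict by unit propagation} — this is the standard invariant of clause learning / 1-UIP–style analysis, and is exactly the property \veripb-based CDCL certification relies on. Concretely, \codecall{improve-core} returns $\reason, \core$ such that $\formula \land \reason \land \core$ still unit-propagates to the conflict (or to the offending $\lit'$); since $\weightedLocalCoreClause$ negates precisely the literals of the \emph{returned} $\reason \cup \core$, the RUP check goes through with the returned values. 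I would make this the one place where I invoke, rather than re-prove, the correctness of conflict analysis.

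The main obstacle is pinning down precisely what invariant \codecall{improve-core} (and the ambient unit propagation bookkeeping) guarantees, and being careful that the literals assumed negatively in $\weightedLocalCoreClause$ match the post–conflict-analysis $\reason$ and $\core$ rather than the pre-analysis ones — if one is sloppy here the RUP derivation might refer to literals that are no longer present in the clause. Everything else is a direct replay of the propagation steps already carried out by the algorithm; the only genuine content is the observation that RUP on the checker side mirrors \codecall{unit-propagate} on the solver side, together with the standard fact that conflict-analysis–minimized reason sets remain UP-sufficient.
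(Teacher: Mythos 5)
Your proposal is correct and follows essentially the same route as the paper, whose own proof is just a one-sentence sketch stating that the core is found precisely when propagating the negation of $\weightedLocalCoreClause$ leads to a contradiction; you flesh this out with the case split on how the core is closed (direct conflict vs.\ propagation of a cost-incurring $\lit'$ whose negation is in $\core$) and correctly identify the one non-trivial point, namely that \codecall{improve-core} must preserve UP-sufficiency of the returned $\reason\cup\core$, which is the standard conflict-analysis invariant the paper implicitly relies on.
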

\begin{proof}[Proof sketch]
	This follows from the fact that Algorithm~\ref{alg:lookahead} finds a local core $\weightedLocalCore$ after the propagation of the negation of $\weightedLocalCoreClause$ leads to a contradiction.
\end{proof}

\begin{theorem}\label{thm:proofmain} 
	Let $\coreset$ be an $\objective$-compatible set of weighted local cores and let $|\objective|$ be the number of objective literals. 
	\begin{enumerate}
		\item If $\weightInObj{\coreset} \geq \bestval$, then there is a cutting planes derivation that derives $\coresetclause$ from the constraint $\objective\leq\bestval-1$ and the constraints $\weightedLocalCoreClause$ for every $q \in \coreset$ using at most 
		$3 |\objective| + 2 |\coreset| + 1$ steps.
		\item If for some unassigned objective literal $\lit$, 
		$ \residualw{(\lit,\coreset)} + \weightInObj{\coreset}  \geq \bestval, $
		then there is a cutting planes derivation 
		that derives $\coresetclause \lor \olnot{\lit}$
		from the constraint $\objective\leq\bestval-1$ and the constraints $\weightedLocalCoreClause$ for every $q \in \coreset$ using at most 
		$3 |\objective| + 2 |\coreset| - 2$ steps.
	\end{enumerate}
\end{theorem}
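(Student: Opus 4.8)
The plan is to translate the arithmetic reasoning behind \cref{prop:correct} into an explicit cutting-planes derivation, being careful about the step count. Recall that each $\weightedLocalCoreClause$ has the form $\bigvee_{\lit\in\reason\cup\core}\olnot\lit$, which normalized reads $\sum_{\lit\in\reason\cup\core}\olnot\lit \geq 1$; since $\core$ contains only negations of objective literals, the literals from $\core$ are themselves objective literals (if $\olnot\lit\in\core$ then $\lit$ is an objective literal). The key observation is that multiplying $\weightedLocalCoreClause$ by the core's weight $\weight$ and summing over all $q\in\coreset$ produces, on the objective-literal part, exactly a term $\sum_i \bigl(\sum_{q : \olnot\blocklit_i\in\core}\weight\bigr)\blocklit_i$, whose coefficients are bounded above by $\weightInObj{\blocklit_i}$ by $\objective$-compatibility. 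So the strategy for part~1 is: (i) form the weighted sum $\sum_{q\in\coreset}\weight\cdot\weightedLocalCoreClause$ ($|\coreset|$ multiplication-by-constant operations folded into $|\coreset|$ additions, roughly $2|\coreset|$ steps, hence the $2|\coreset|$ in the bound); (ii) add literal axioms $\olnot\blocklit_i\geq 0$ to ``pad'' the objective-literal coefficients up from $\sum_{q:\olnot\blocklit_i\in\core}\weight$ to the full $\weightInObj{\blocklit_i}$, and add literal axioms for the reason literals appearing with coefficient short of what is needed — this is where most of the $3|\objective|$ budget goes; (iii) add the objective bound $\objective\leq\bestval-1$, i.e.\ $\sum_i\weightInObj{\blocklit_i}\olnot\blocklit_i\geq\weightInObj{\obj}-\bestval+1$ after normalization; and (iv) the objective-literal terms cancel, leaving $\weightInObj{\coreset}+(\text{reason part})\geq \weightInObj{\coreset}-\bestval+1+\cdots$, and since $\weightInObj{\coreset}\geq\bestval$ the right-hand side is at least $1$, so after saturation the surviving constraint is exactly the clause $\coresetclause$ over the reason literals. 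The final saturation (or division) is the ``$+1$'' step.

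For part~2 the idea is essentially the same but with a twist that actually \emph{saves} steps. We want to additionally conclude $\olnot\lit$ for the distinguished unassigned objective literal $\lit$. The arithmetic of hardening says: if we pretend $\lit$ is cost-incurring (i.e.\ add $\weightInObj{\lit}$ to the accumulated weight via $\residualw(\lit,\coreset)+\weightInObj{\coreset}\geq\bestval$), then the soft-conflict bound triggers. Concretely, one forms the same weighted sum of the $\weightedLocalCoreClause$'s, but now one does \emph{not} pad the coefficient of $\olnot\lit$ up to $\weightInObj{\lit}$ — instead one leaves a deficit of exactly $\residualw(\lit,\coreset)$, which after combining with the objective bound and saturating leaves $\olnot\lit$ with a positive coefficient in the final clause, yielding $\coresetclause\lor\olnot\lit$. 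Because $\lit$ is unassigned it is not among the reason literals, so we skip its three ``padding/bookkeeping'' literal axioms, and moreover the case $\weightInObj{\coreset}<\bestval$ strictly means the right-hand side lands at exactly $1$ after one fewer adjustment; together this accounts for the drop from $3|\objective|+2|\coreset|+1$ to $3|\objective|+2|\coreset|-2$. I would prove part~2 by first carrying out part~1's construction as a black box on a hypothetical core set $\coreset\cup\{\langle\residualw(\lit,\coreset),\{\}, \{\olnot\lit\}\rangle\}$-like object, then checking that the $\olnot\lit$ term is never eliminated and that three literal-axiom steps are genuinely absent.

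The main obstacle I anticipate is \textbf{getting the step count exactly right} rather than merely up to a constant. The conceptual derivation is short, but the bound $3|\objective|+2|\coreset|+1$ is tight enough that I need to pin down precisely which literal axioms are needed and whether multiply-by-constant is charged as one step or bundled with the following addition — this depends on the exact accounting conventions of the cutting-planes system as used by \veripb. In particular I would need to argue that exactly three literal-axiom steps per objective literal suffice (plausibly: one to pad $\olnot\blocklit_i$ up to $\weightInObj{\blocklit_i}$ when the accumulated weight on it falls short, and up to two more for the reason-literal occurrences and sign normalization), and that no objective literal needs more. A secondary subtlety is normalization: the clauses $\weightedLocalCoreClause$ mix reason literals (which may be positive or negative in $\assignment$) and negated objective literals, and the objective constraint $\objective\leq\bestval-1$ must be flipped to $\geq$ form introducing $\weightInObj{\obj}-\bestval+1$ on the right; I would handle this by fixing a normal form up front (all constraints $\geq$, all literals as written in $\reason\cup\core$) so that the cancellation in step~(iv) is literally coefficient-wise subtraction. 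Once the bookkeeping is fixed, verifying that the leftover constraint saturates to precisely $\coresetclause$ (resp.\ $\coresetclause\lor\olnot\lit$) is routine given $\objective$-compatibility and the triggering inequality.
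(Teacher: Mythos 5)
Your proposal is essentially the paper's own proof: take the $\weight$-weighted sum of the clauses $\weightedLocalCoreClause$, use literal axioms scaled by the residual weights $\residualw(\lit,\coreset)$ to make the objective-literal coefficients match those of the solution-improving constraint $\objective\leq\bestval-1$, add the two so the objective literals cancel, observe the right-hand side is positive by the soft-conflict (resp.\ hardening) condition, and divide by a large constant to obtain the clause; for part~2 the literal $\lit$ is simply not weakened, saving exactly the three steps (axiom, multiplication, addition) that resolve your uncertainty about the accounting. The only cosmetic differences are that the paper adds the axioms $\lit\geq 0$ to the solution-improving constraint rather than to the core sum (equivalent), and the final step is a division by a sufficiently large number rather than saturation (saturation alone would not yield a clause when the degree exceeds one).
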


Note that since the clauses that are finally learned by \maxcdcl in lines \ref{line:addclause-softconflict} and \ref{line:addclause-hardening} of Algorithm~\ref{alg:lookahead} are derived by standard CDCL on the clauses derived in Theorem~\ref{thm:proofmain}, they can be proven by RUP.
Moreover, when line~\ref{line:returnmaxcdcl} of Algorithm~\ref{alg:maxcdcl} finds a conflict at root level, the clause $0 \geq 1$ is also provable by RUP, which proves the optimality of the last found solution due to the objective improvement rule.

We end this section with an example illustrating the cutting planes derivation of the first case of Theorem~\ref{thm:proofmain}.\footnote{A more advanced example, as well as the proof for Theorem~\ref{thm:proofmain} can be found in \onlyAppendix{the appendix.}\notInAppendix{the extended version of this paper \cite{vandesande2025certifiedbranchandboundmaxsatsolving}.}} 
\begin{example}
	Let us consider a MaxSAT instance expressed as a PBO instance $(\formula, \objective)$, with $\objective = 3 x_1 + 5 x_2 + 5 x_3 + 6 x_4$. Assume that the best found solution so far has objective value $\bestval = 7$, that the current partial assignment assigns $y_1$ and $y_2$ to true,
	and that algorithm~\ref{alg:lookahead} has found a set of local cores $\coreset = \{\weightedLocalCore_1, \weightedLocalCore_2\}$, with 
	\begin{align*}
		\weightedLocalCore_1 &= \langle 3, \{y_1\}, \{\olnot{x_1}, \olnot{x_2}\} \rangle\text{ \quad and} \\
		\weightedLocalCore_2 &= \langle 5, \{y_2\}, \{\olnot{x_3}, \olnot{x_4}\} \rangle\text{.}
	\end{align*} 
	We can observe that this is indeed an $\objective$-compatible set and that the first case of Theorem~\ref{thm:proofmain} is applicable. Hence, we will derive the clause $\coresetclause$, which is $\olnot{y_1} + \olnot{y_2} \geq 1$.
	
	Following Proposition~\ref{prop:localcoreclauseRUP}, we can derive the clauses
	\begin{align*}
		\clause_{\weightedLocalCore_1} &:= \olnot{y_1} + x_1 + x_2 \geq 1\text{, and} \\
		\clause_{\weightedLocalCore_2} &:= \olnot{y_2} + x_3 + x_4 \geq 1
	\end{align*}
	by Reverse Unit Propagation.
	Multiplying $\clause_{\weightedLocalCore_1}$ and $\clause_{\weightedLocalCore_2}$ by their respective weight and adding them up results in
	\begin{align}
		3 \olnot{y_1} + 3 x_1 + 3 x_2 + 5 \olnot{y_2} + 5 x_3 + 5 x_4 \geq 8 \label{eq:weightedsumcoreclauses}
	\end{align}
	Now, from the solution improving constraint, we can obtain 
	\begin{align}
		3 x_1 + 3 x_2 + 5 x_3 + 5 x_4 \leq 6 \label{eq:weakenedSIC}
	\end{align}
	by the addition of literal axioms $x_4 \geq 0$ and $x_2 \geq 0$ multiplied by $2$.
	Addition of \eqref{eq:weightedsumcoreclauses} and \eqref{eq:weakenedSIC}, and simplification gives 
	\begin{align*}
		3 \olnot{y_1} + 5 \olnot{y_2} \geq 2
	\end{align*}
	Dividing this by $5$ now yields $\coresetclause$.
\end{example}

	\section{Certifying CNF Encodings based on (Multi-Valued) Decision Diagrams}\label{sec:mdd}\label{sec:MDD}
	When a solution is found (at line~\ref{line:solution} of Algorithm~\ref{alg:maxcdcl}), \maxcdcl decides heuristically whether or not to add a CNF-encoding of the solution-improving constraint  \[\objective\leq \bestval - 1.\]
For many CNF encodings of pseudo-Boolean constraints, it is well-known how to get certification \cite{msc/Vandesande23,GMNO22CertifiedCNFTranslationsPseudo-BooleanSolving,BBNOPV24CertifyingWithoutLossGeneralityReasoningSolution-Improving}. 
One notable exception is the CNF encoding based on \emph{binary decision diagrams} (BDD), which \citet{BCSV20MDD-basedSATencodingpseudo-Booleanconstraintsat-most-one}  recently generalized to so-called \emph{multi-valued decision diagrams} (MDD)  and is used by \maxcdcl.
Most of the interesting questions from a proof logging perspective already show up for BDDs, so to keep the presentation more accessible, we focus on that case and afterwards briefly discuss how this generalizes to MDDs. 
Formal details and proofs are included in the supplementary material.

A \emph{Binary Decision Diagram (BDD)} is a (node- and \mbox{edge-)labeled} graph with 	two leaves, labeled true ($\ltrue$) and false ($\lfalse$), respectively, where  each internal node is labeled with a variable and has two outgoing edges, labeled true ($\ltrue$) and false ($\lfalse$), respectively. 
We write $\child{\node}{\lfalse}$ and $\child{\node}{\ltrue}$ for the children following the edge with labels $\lfalse$ and $\ltrue$, respectively.
Each node $\node$ in a BDD represents a \emph{Boolean function}:
the true and false leaf nodes represent a tautology and contradiction respectively;
if $\node$ is a node labeled $x$ with true child $\node_\ltrue$ and false child $\node_\lfalse$, it maps any (total) assignment $\assignment$ to $\node(\assignment)$, which is defined as $\node_\ltrue(\assignment)$ if $x\in \assignment$ and as $\node_\lfalse(\assignment)$ if $\olnot x \in \assignment$.

A BDD is \emph{ordered} if there is a total order of the variables such that each path through the BDD respects this order and it is \emph{reduced} if two conditions hold: 
\emph{(1)} no node has two identical children
and \emph{(2)} no two nodes have the same label, $\ltrue$-child and $\lfalse$-child. 
For a fixed variable ordering, each Boolean function has a unique ordered and reduced representation as a BDD, i.e., ordered and reduced BDDs form a canonical representation of Boolean functions.

When using BDDs to encode a solution-improving constraint $\obj \leq \bestval-1$ as clauses, first a BDD representing this constraint is constructed, then a set of clauses is generated from this BDD. 

\paragraph*{Phase 1: Construction of a BDD} 
The standard way to create a reduced and ordered BDD for 
\[\sum_{i=1}^n \cost_i \blocklit_i \leq \bestval-1\]
 is to first recursively descend and create BDDs $\node_\ltrue$ and $\node_\lfalse$ for 
 \begin{align*}
 	&\sum_{i>1}^n \cost_i \blocklit_i \leq v^*-1-a_1 \text{\quad and}\\ 
 	&\sum_{i>1}^n \cost_i \blocklit_i \leq v^*-1\end{align*}
 	respectively, and then combine them into the node $\bddOf{x_1}{\node_\ltrue}{\node_\lfalse}$. 
Using memoization, the BDD is always kept reduced: if a node is being created with the same label, and the same two children as an existing node, the existing node is returned instead; if $\node_\ltrue = \node_\lfalse$, then $\node_\ltrue$ is returned. 

This procedure, which first creates the two children, will however always take exponential time (in terms of the number of objective literals), while this can in many cases be avoided. 
Crucially, all PB constraints for which a BDD node are created are of the form 
$ \sum_{i\geq k}^n \cost_i \blocklit_i \leq A$ and for a fixed $k$, the fact that we work with Boolean variables means that many different right-hand-sides will result in an equivalent constraint. 
Moreover, such values can be computed while constructing the BDD using a dynamic programming approach. 
In the following proposition, which formalizes the construction, we say that $[l,u]$ is a \emph{degree interval} for $\sum_{i\geq k}^n\cost_i\blocklit_i$ when we mean that for all values $A$ in the (possibly unbounded) interval $[l,u]$, $\sum_{i\geq k}^n\cost_i\blocklit_i\leq A$ is equivalent to $\sum_{i\geq k}^n\cost_i\blocklit_i\leq u$. 
\begin{proposition}[\cite{ANORM12NewLookBDDsPseudo-BooleanConstraints}]
	If $[l_\ltrue,u_\ltrue]$ and $[\l_\lfalse,u_\lfalse]$ are degree intervals for $\sum_{i = k+1}^n\cost_i\blocklit_i$, then 
	\[ [\max(l_\ltrue+\cost_k,l_\lfalse), \min(u_\ltrue +\cost_k,u_\lfalse)]
	\]
	is a degree interval for  $\sum_{i = k}^n\cost_i\blocklit_i$.
\end{proposition}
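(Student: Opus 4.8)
The plan is to recast ``degree interval'' as a statement about the set of values the sum can attain, after which the proposition reduces to an elementary containment of integer intervals. Throughout I would write $S_j := \sum_{i=j}^n \cost_i\blocklit_i$ and let $V_j \subseteq \mathbb{Z}$ be the set of values $S_j(\assignment)$ realised as $\assignment$ ranges over all total assignments; for $l \in \mathbb{Z}\cup\{-\infty\}$ and $u \in \mathbb{Z}\cup\{+\infty\}$ I would write $(l,u]$ for $\{j \in \mathbb{Z} : l < j \le u\}$, with the conventions $(-\infty,u] = \{j : j \le u\}$ and $(l,+\infty] = \{j : j > l\}$ fixed up front.

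The first step is to observe that $[l,u]$ is a degree interval for $S_j$ if and only if $V_j \cap (l,u] = \emptyset$. For any $A \in [l,u]$ the implication $S_j \le A \limplies S_j \le u$ is immediate since $A \le u$, so being a degree interval amounts to $S_j \le u \limplies S_j \le A$ holding for every $A \in [l,u]$, equivalently for the tightest choice $A = l$; and ``$S_j \le u \limplies S_j \le l$ for every assignment'' says precisely that no assignment realises a value of $S_j$ in $(l,u]$. (If $l = -\infty$ this reads ``$S_j \le u$ is unsatisfiable'', matching $V_j \cap (-\infty,u] = \emptyset$.) This is a one- or two-line unfolding of the definition.

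The second ingredient is the decomposition $V_k = V_{k+1} \cup (\cost_k + V_{k+1})$, where $\cost_k + V_{k+1} := \{\cost_k + t : t \in V_{k+1}\}$. This holds because the objective literals range over pairwise distinct variables, so the variable underlying $\blocklit_k$ does not occur in $S_{k+1}$ and hence $\blocklit_k$ can be set to $0$ or to $1$ independently of the remaining variables; writing $S_k = \cost_k\blocklit_k + S_{k+1}$ then splits into these two cases.

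Finally I would combine these. Set $L := \max(l_\ltrue+\cost_k,\, l_\lfalse)$ and $U := \min(u_\ltrue+\cost_k,\, u_\lfalse)$. From $L \ge l_\lfalse$ and $U \le u_\lfalse$ we have $(L,U] \subseteq (l_\lfalse,u_\lfalse]$, so the hypothesis that $[l_\lfalse,u_\lfalse]$ is a degree interval for $S_{k+1}$ gives $V_{k+1} \cap (L,U] = \emptyset$. From $L \ge l_\ltrue+\cost_k$ and $U \le u_\ltrue+\cost_k$ we have $(L-\cost_k,\,U-\cost_k] \subseteq (l_\ltrue,u_\ltrue]$, so the hypothesis that $[l_\ltrue,u_\ltrue]$ is a degree interval for $S_{k+1}$ gives $V_{k+1}\cap(L-\cost_k,U-\cost_k] = \emptyset$, which after shifting by the integer $\cost_k$ is $(\cost_k+V_{k+1})\cap(L,U] = \emptyset$. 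Taking the union and invoking the decomposition, $V_k \cap (L,U] = \emptyset$, so by the first step $[L,U]$ is a degree interval for $S_k$, as claimed. The argument is elementary; the only places needing care are getting the ``if and only if'' of the first step exactly right, treating the unbounded endpoints uniformly (hence fixing the interval conventions at the outset), and, for the decomposition, recording the standard assumption that the objective literals use distinct variables. No case split on whether $L \le U$ is needed: if the endpoints cross, $(L,U]$ is empty and $[L,U]$ is vacuously a degree interval.
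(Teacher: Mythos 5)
Your proof is correct. Note that the paper itself gives no proof of this proposition --- it is imported verbatim from Ab\'io et al.\ (2012) --- so there is no in-paper argument to compare against; your reformulation of ``degree interval'' as the attainable-value set $V_j$ avoiding the half-open interval $(l,u]$ is the standard and cleanest way to verify it, and your handling of the unbounded endpoints and of the degenerate case $L>U$ is careful. One small remark: the only direction of your decomposition that the argument uses is $V_k \subseteq V_{k+1} \cup (\cost_k + V_{k+1})$, which follows pointwise from $S_k(\assignment) = \cost_k\blocklit_k(\assignment) + S_{k+1}(\assignment)$ with $\blocklit_k(\assignment)\in\{0,1\}$ and needs no independence between $\blocklit_k$ and the remaining literals; the distinct-variables assumption you flag is only required for the reverse inclusion, which you never invoke, so the caveat can be dropped.
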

The dynamic programming approach for creating ordered and reduced BDDs now consists in keeping track of this interval for each translated PB constraint and reusing already created BDDs whenever possible. 
From now on, we will identify a node in a BDD $\node = \bddOf{k}{l}{u}$ as the node that represents the boolean function $\sum_{i=k}^n \cost_i \blocklit_i \leq [l,u]$. 
 
\paragraph*{Phase 2: A CNF Encoding From the BDD} 
Given a BDD that represents a PB constraint, constructed as discussed above, we can get a CNF encoding as follows. 
\begin{itemize}
	\item For each internal node $\node$ in the BDD, a new variable $v_\node$ is created; intuitively this variable is true only when the Boolean function represented by $\node$ is true. In practice for the two leaf nodes no variable is created but their truth value is filled in directly. However, in the proofs below we will, to avoid case splitting pretend that a variable exists for each node. 
	\item For each internal node $\node$ that is labeled with literal $\blocklit$ with children $\node_\ltrue = \child{\node}{\ltrue}$ and $\node_\lfalse = \child{\node}{\lfalse}$, the clauses 
	\begin{align} \olnot \blocklit +  v_{\node_\ltrue} + \olnot v_\node & \geq 1 \text{, \quad and }\label{eq:bdd:clauseone}\\
	v_{\node_\lfalse} + \olnot v_\node & \geq 1		\label{eq:bdd:clausetwo}
	\end{align} are added.   The first clause expresses that if $\blocklit$ is true and $\node_\ltrue$ is false, then so is $\node$. 
	The second clause expresses that whenever $\node_\lfalse$ is false, so is $\node$ (this does not hold for BDDs in general, but only for the specific ones generated here, where we know $\node$ to represent a constraint \mbox{$\sum_{i\geq k}^n\cost_i\blocklit_i\leq A$} and $\node_\lfalse$ represents $\sum_{i\geq k+1}^n\cost_i\blocklit_i\leq A$, both with positive coefficients). 
	\item Finally, for the top node $\node_\top$ representing $\obj \leq \bestval-1$, the unit clause $v_{\node_\top}$ is added. 
\end{itemize}

\paragraph{Certification}
Our strategy for proof logging for this encoding follows the general pattern described by \citet{VDB22QMaxSATpbCertifiedMaxSATSolver}, namely, we first introduce the fresh variables by explicitly stating what their meaning is in terms of pseudo-Boolean constraints and later we derive the clauses from those constraints. 
Due to the fact that BDDs are reduced, the first step becomes non-obvious. 
Indeed, the new variables represent multiple (equivalent) pseudo-Boolean constraints at once. 
To make this work in practice, for each node $\node = \bddOf{k}{l}{u}$, we will show we can derive the constraints that express

\begin{align*}
	v_\node &\limplies \sum_{i\geq k}^n \cost_i \blocklit_i \leq l \text{\quad and}\\
%
%
%
	v_\node &\limpliedby \sum_{i\geq k}^n \cost_i \blocklit_i \leq u.
\end{align*}
We will refer to these two constraints as the \emph{defining constraints} for node  $\node$. 
These constraints together precisely determine the meaning of $v_\node$, but they also contain the information that the partial sum cannot take any values between $l+1$ and $u$. 
Hence, proving them is expected to require substantial effort. 
Once we have these constraints for each node, deriving the clauses becomes very easy. 
\begin{proposition}
	\label{prop:clauses-mdd}
	For each internal node $\node$, the clauses \eqref{eq:bdd:clauseone} and \eqref{eq:bdd:clausetwo} can be derived by a cutting planes derivation consisting of one literal axiom, one multiplication, three additions and one division 
	from the defining constraints for $\node$, $\child{\node}{\ltrue}$ and $\child{\node}{\lfalse}$.
\end{proposition}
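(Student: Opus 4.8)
The plan is to show that from the three pairs of defining constraints — for $\node$, for $\node_\ltrue = \child{\node}{\ltrue}$, and for $\node_\lfalse = \child{\node}{\lfalse}$ — the two clauses \eqref{eq:bdd:clauseone} and \eqref{eq:bdd:clausetwo} follow by short cutting-planes derivations, matching the claimed budget of one literal axiom, one multiplication, three additions, and one division. Write $\node = \bddOf{k}{l}{u}$, let $\blocklit$ be the label of $\node$ (so $\blocklit = \blocklit_k$ with coefficient $\cost_k$), and recall the bound propagation of the earlier proposition: $\node_\ltrue$ represents $\sum_{i\geq k+1}^n \cost_i\blocklit_i \leq [l_\ltrue, u_\ltrue]$ and $\node_\lfalse$ represents $\sum_{i\geq k+1}^n \cost_i\blocklit_i \leq [l_\lfalse, u_\lfalse]$, where $l = \max(l_\ltrue + \cost_k, l_\lfalse)$ and $u = \min(u_\ltrue + \cost_k, u_\lfalse)$.

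For \eqref{eq:bdd:clausetwo}, the key fact is $l_\lfalse \geq l$ (immediate from $l = \max(l_\ltrue+\cost_k, l_\lfalse)$). First I would take the defining constraint $v_{\node_\lfalse} \limpliedby \sum_{i\geq k+1}^n \cost_i\blocklit_i \leq u_\lfalse$; since $u_\lfalse \leq u$, weakening its right-hand side (or rather, observing it already entails the needed statement) gives that $\sum_{i\geq k}^n\cost_i\blocklit_i \leq u$ in a context where $\olnot v_{\node_\lfalse}$ holds forces $\sum_{i\geq k+1}^n\cost_i\blocklit_i > u_\lfalse$, contradicting... Actually the cleanest route: combine $v_\node \limplies \sum_{i\geq k}^n\cost_i\blocklit_i \leq l$ with the fact that $\sum_{i\geq k}^n\cost_i\blocklit_i \leq l \leq l_\lfalse$ implies $\sum_{i\geq k+1}^n\cost_i\blocklit_i \leq l_\lfalse$ (dropping the nonnegative term $\cost_k\blocklit_k$), which by $v_{\node_\lfalse} \limpliedby \sum_{i\geq k+1}^n\cost_i\blocklit_i \leq l_\lfalse$ (again weakening from $u_\lfalse$, legitimate since they represent the same function on Boolean points — but to stay purely in cutting planes we use the interval bound $l_\lfalse$ directly or add a literal axiom to bridge) yields $v_{\node_\lfalse}$. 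Chaining $v_\node \limplies v_{\node_\lfalse}$ is exactly clause \eqref{eq:bdd:clausetwo}. This should cost a multiplication (to scale the implication constraints so coefficients line up), one addition, and a division to recover the clause.

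For \eqref{eq:bdd:clauseone}, the relevant inequality is $u_\ltrue + \cost_k \geq u$, equivalently $u_\ltrue \geq u - \cost_k$, from $u = \min(u_\ltrue+\cost_k, u_\lfalse)$. Here I would start from $v_{\node_\ltrue} \limpliedby \sum_{i\geq k+1}^n\cost_i\blocklit_i \leq u_\ltrue$ together with $v_\node \limplies \sum_{i\geq k}^n\cost_i\blocklit_i \leq l$. Under the hypothesis $\blocklit = 1$ (i.e. adding the literal axiom $\olnot\blocklit \geq 0$ to carry the case) and $v_\node$, we get $\cost_k + \sum_{i\geq k+1}^n\cost_i\blocklit_i \leq l \leq l_\ltrue + \cost_k$, hence $\sum_{i\geq k+1}^n\cost_i\blocklit_i \leq l_\ltrue \leq u_\ltrue$, so $v_{\node_\ltrue}$ holds. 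That gives $\blocklit \land v_\node \limplies v_{\node_\ltrue}$, i.e. clause \eqref{eq:bdd:clauseone}. The bookkeeping — one literal axiom for $\olnot\blocklit$, one multiplication to make the $\cost_i$-weighted sums cancel against the objective-literal sum, the three additions to combine the implication constraints and the axiom, and a final division to saturate down to a clause — is exactly the budget in the statement. The main obstacle is the coefficient arithmetic: ensuring the single multiplication and single division suffice to turn the PB defining constraints (with large coefficients $\cost_i$) into unit clauses, which hinges on the partial sums $\sum_{i\geq k+1}^n\cost_i\blocklit_i$ appearing with identical coefficient vectors in the two defining constraints being combined, so that their difference is a pure $\pm v$, $\pm\blocklit$ combination; once that cancellation is set up, the division by the common coefficient yields the clause. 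I would verify this alignment carefully for both leaves (where $v_{\node_\ltrue}$ or $v_{\node_\lfalse}$ is a literal constant rather than a fresh variable), which is the only place the argument could need an extra step — but since the proposition explicitly adopts the convention that a variable exists for every node, this case distinction is absorbed.
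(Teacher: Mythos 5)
Your overall strategy is the paper's: for each clause you add the $\limplies$ direction of the defining constraints of $\node$ (i.e., the PB form of $v_\node \limplies \sum_{i\geq k}\cost_i\blocklit_i \leq l$) to the $\limpliedby$ direction of the defining constraints of the relevant child, let the tails $\sum_{i\geq k+1}\cost_i\blocklit_i$ cancel, dispose of the remaining $\cost_k\blocklit_k$ term via a literal axiom (for clause \eqref{eq:bdd:clausetwo}) or by complementing $\blocklit_k$ (for clause \eqref{eq:bdd:clauseone}), and divide by a sufficiently large constant. That is exactly how the paper proceeds, and your accounting of the step budget is essentially right.

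However, the inequalities you cite to justify that the resulting degree is positive are systematically reversed, and you route the argument through the wrong endpoints of the children's intervals. From $l=\max(l_\ltrue+\cost_k,\,l_\lfalse)$ and $u=\min(u_\ltrue+\cost_k,\,u_\lfalse)$ one gets $l\geq l_\lfalse$ and $u\leq u_\lfalse$ --- not ``$l_\lfalse\geq l$'' and ``$u_\lfalse\leq u$'' as you write --- so your claimed chain $\sum_{i\geq k+1}\cost_i\blocklit_i\leq l\leq l_\lfalse$ fails, and likewise $l\leq l_\ltrue+\cost_k$ is false in general. The facts actually needed (and used by the paper) are $l\leq u\leq u_\lfalse$ and $l\leq u\leq u_\ltrue+\cost_k$: the $\limpliedby$ defining constraints of the children are stated with the \emph{upper} endpoints $u_\lfalse$ and $u_\ltrue$, and it is precisely these that make the degree of the combined constraint, namely $u_\lfalse-l+1$ (resp.\ $u_\ltrue+\cost_k-l+1$ after replacing $\blocklit_k$ by $1-\olnot{\blocklit_k}$), strictly positive, so that the final division by a large constant yields a clause. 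Once you replace $l_\lfalse,l_\ltrue$ by $u_\lfalse,u_\ltrue$ and flip the corresponding inequalities, your derivation coincides with the paper's.
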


\begin{proposition}
	Let $\node_\top = \bddOf{1}{l}{\bestval-1}$ for some $l \leq \bestval-1$ be the root node of the BDD. The unit clause $v_{\node_\top} \geq 1$ can be derived by a cutting planes derivation of one addition and one deletion from the defining constraints for $\node_\top$ and the solution-improving constraint. 
\end{proposition}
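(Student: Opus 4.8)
The plan is to exploit that, for the root node, the ``$\limpliedby$'' half of the defining constraint has the solution-improving constraint \emph{itself} as its antecedent, so the unit clause falls out of a single cutting-planes addition (the ``$\limplies$'' half is not needed here). Concretely, I would first write the two relevant constraints in normalized form. The defining constraint $v_{\node_\top} \limpliedby \bigl(\objective \leq \bestval - 1\bigr)$ is, as a PB constraint, $\bestval\cdot v_{\node_\top} + \objective \geq \bestval$ (writing $\objective = \sum_{i=1}^n \cost_i\blocklit_i$); crucially, both the degree and the coefficient of $v_{\node_\top}$ are exactly $\bestval$. The solution-improving constraint $\objective \leq \bestval - 1$ normalizes to $\sum_{i=1}^n \cost_i\olnot{\blocklit_i} \geq \bigl(\sum_{i=1}^n\cost_i\bigr) - \bestval + 1$.

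The single addition step is then to add these two constraints. Each pair $\cost_i\blocklit_i + \cost_i\olnot{\blocklit_i}$ collapses to the constant $\cost_i$, so all objective literals disappear and the right-hand side shrinks by $\sum_i\cost_i$; what is left is $\bestval\cdot v_{\node_\top} \geq 1$, which over $0$--$1$ variables is the unit clause $v_{\node_\top} \geq 1$ (the coefficient $\bestval$ is trimmed to the degree $1$ by saturation, which can be folded into the same proof step). Finally, the one deletion removes the solution-improving PB constraint $\objective \leq \bestval - 1$: at this point the CNF encoding is complete — the clauses \eqref{eq:bdd:clauseone} and \eqref{eq:bdd:clausetwo} for all nodes together with the unit clause just derived — and it implies the PB constraint (this is, up to bookkeeping, the soundness of the encoding), so the deletion is RUP-checkable, and afterwards the proof configuration matches the solver's clause database.

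I do not anticipate a genuine obstacle; in contrast to deriving the defining constraints themselves, which must ``see through'' the reduced BDD and is the costly part of the certification, this last step is purely arithmetic. The only points demanding care are getting the normalized form of the reified constraint right — in particular that its degree and the coefficient of $v_{\node_\top}$ both equal $\bestval$, which is exactly what makes the objective literals cancel against the solution-improving constraint — and checking that $\bestval\cdot v_{\node_\top}\geq 1$ is indeed identified with the unit clause $v_{\node_\top}\geq 1$.
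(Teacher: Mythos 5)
Your proposal is correct and follows the same route as the paper: add the $\limpliedby$ half of the defining constraint for $\node_\top$ (namely $\bestval\cdot v_{\node_\top} + \objective \geq \bestval$) to the normalized solution-improving constraint so that all objective literals cancel, leaving $\bestval\cdot v_{\node_\top}\geq 1$. The only divergence is in the second step: the paper's ``deletion'' is evidently a typo for ``division'' (the appendix proof divides by $\bestval$ to obtain $v_{\node_\top}\geq 1$), so your reinterpretation of it as removing the solution-improving constraint from the database is unnecessary — your saturation step already does the job of the paper's division, and the extra RUP-deletion argument can simply be dropped.
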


The harder part is to actually derive the defining constraints for the internal nodes. 
Also this can be done in the \veripb proof system, as we show next. 

\begin{proposition}\label{prop:mainproofbdd}
	Let $\node = \bddOf{k}{l}{u}$ be an internal node with label $\blocklit_k$ and with children $\node_\ltrue = \child{\node}{\ltrue}$ and $\node_\lfalse = \child{\node}{\lfalse}$.
	
	If the defining constraints for $\node_\ltrue$ and $\node_\lfalse$ are given, then the defining constraints for $\node$ can be derived in the VeriPB proof system.
\end{proposition}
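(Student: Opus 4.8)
Throughout, write $\sigma_j$ for the partial sum $\sum_{i=j}^{n}\cost_i\blocklit_i$, so that $\sigma_k=\cost_k\blocklit_k+\sigma_{k+1}$, and let $[l_\ltrue,u_\ltrue]$ and $[l_\lfalse,u_\lfalse]$ be the degree intervals of $\node_\ltrue$ and $\node_\lfalse$; by the degree-interval recurrence of \citet{ANORM12NewLookBDDsPseudo-BooleanConstraints} recalled above, $l=\max(l_\ltrue+\cost_k,\,l_\lfalse)$ and $u=\min(u_\ltrue+\cost_k,\,u_\lfalse)$. The plan has two parts. First, I would introduce the fresh variable $v_\node$ by reifying $\sigma_k\leq u$: since $v_\node$ occurs nowhere yet, both directions of the reification are instances of the redundance-based strengthening rule with the trivial witnesses that map $v_\node$ to $0$ and to $1$, and they yield $v_\node\limpliedby\sigma_k\leq u$ --- which is already one of the two defining constraints of $\node$ --- together with $v_\node\limplies\sigma_k\leq u$. (Reifying $\sigma_k\leq l$ instead would be equivalent, since by the degree-interval property these two constraints denote the same Boolean function; one merely picks a syntactic form.) It then remains to strengthen $v_\node\limplies\sigma_k\leq u$ to the second defining constraint $v_\node\limplies\sigma_k\leq l$.

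Second, to derive $v_\node\limplies\sigma_k\leq l$ I would argue by contradiction, which the \veripb system supports (via the redundance rule with an empty witness): it suffices to refute, by a cutting-planes derivation, the proof configuration extended with $v_\node$ and $\sigma_k\geq l+1$. Combined with $v_\node\limplies\sigma_k\leq u$ from the first part this gives $l+1\leq\sigma_k\leq u$. I would then case-split on $\blocklit_k$. In the branch $\blocklit_k=1$ we have $\sigma_{k+1}=\sigma_k-\cost_k$, hence $\sigma_{k+1}\leq u-\cost_k\leq u_\ltrue$ (using $u\leq u_\ltrue+\cost_k$) and $\sigma_{k+1}\geq l+1-\cost_k\geq l_\ltrue+1$ (using $l\geq l_\ltrue+\cost_k$); chaining the two given defining constraints of $\node_\ltrue$ --- first $\sigma_{k+1}\leq u_\ltrue\limplies v_{\node_\ltrue}$, then $v_{\node_\ltrue}\limplies\sigma_{k+1}\leq l_\ltrue$ --- yields $\sigma_{k+1}\leq l_\ltrue$, contradicting $\sigma_{k+1}\geq l_\ltrue+1$. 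In the branch $\blocklit_k=0$ we have $\sigma_{k+1}=\sigma_k$, hence $\sigma_{k+1}\leq u\leq u_\lfalse$ and $\sigma_{k+1}\geq l+1\geq l_\lfalse+1$, and the symmetric chaining of the defining constraints of $\node_\lfalse$ gives the contradiction. If a child happens to be a leaf, one of its interval endpoints is unbounded and that branch closes even more directly --- against the literal-axiom sums $\sigma_{k+1}\geq0$ or $\sigma_{k+1}\leq\sum_{i=k+1}^{n}\cost_i$ --- so it needs no separate treatment.

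The mathematical content is routine; the real work --- as is typical for \veripb derivations --- is turning it into an actual cutting-planes proof. Each ``chaining'' step must be realized as a single positive linear combination that cancels the auxiliary child variable (multiplying the two children's constraints by the opposite coefficients on that variable and using $v+\olnot v=1$), the degree-interval inequalities ($u-\cost_k\leq u_\ltrue$, $l\geq l_\ltrue+\cost_k$, $u\leq u_\lfalse$, $l\geq l_\lfalse$) must be tracked through the large coefficients appearing in the normalized reified constraints, and the case split on $\blocklit_k$ must be implemented purely with literal axioms, one addition and one final division. I expect this case-split bookkeeping to be the main obstacle. Once this is in place, applying the step from the leaves upward derives the defining constraints for every node of the BDD; the extension to MDDs follows the same pattern, with the binary split on $\blocklit_k$ replaced by a split over the finitely many relevant value-combinations of the set of objective literals labelling the node, and the degree-interval recurrence by its MDD counterpart.
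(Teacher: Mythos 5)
Your plan coincides with the paper's proof in all essentials: introducing the node variable by reification via redundance-based strengthening, observing that the only non-trivial content is ruling out the gap $[l+1,u]$ for the partial sum, the case split on $\blocklit_k$, the use of the degree-interval inequalities $u\leq u_\ltrue+\cost_k$, $l\geq l_\ltrue+\cost_k$, $u\leq u_\lfalse$, $l\geq l_\lfalse$ to chain through the children's defining constraints in each branch, and the remark that leaf children close their branch immediately. (Your observation that the negation of the target implication already forces both $v_\node\geq 1$ and $\sum_{i\geq k}\cost_i\blocklit_i\geq l+1$ via literal axioms is also correct.)

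The one step that would fail as literally stated is recombining the two branches ``with literal axioms, one addition and one final division'' at the level of the large-coefficient target $v_\node\limplies\sum_{i\geq k}\cost_i\blocklit_i\leq l$. Adding $d\,\olnot{\blocklit_k}+\sum_i a_i\lit_i\geq d$ and $d\,\blocklit_k+\sum_i a_i\lit_i\geq d$ gives $2\sum_i a_i\lit_i\geq d$, and division by $2$ then only yields $\sum_i a_i\lit_i\geq\lceil d/2\rceil$; the trick recovers the original constraint only when it is a \emph{clause}. Moreover, cutting planes has no internal case-split rule, so the two branches must be surfaced as separately derived top-level constraints. The paper resolves exactly this by keeping a \emph{second} reification variable, $v_\node\lequiv\sum_{i\geq k}\cost_i\blocklit_i\leq l$ and $v_\node'\lequiv\sum_{i\geq k}\cost_i\blocklit_i\leq u$: the two branches then become the clauses $\olnot{\blocklit_k}+\olnot{v_\node'}+v_\node\geq 1$ and $\blocklit_k+\olnot{v_\node'}+v_\node\geq 1$ (each proved by its own contradiction subproof, with precisely the chaining you describe), their sum divided by $2$ gives $\olnot{v_\node'}+v_\node\geq 1$, and multiplying that clause by $u+1$ and adding $\Creif{\limpliedby}{v_\node'}$ yields the missing defining constraint $v_\node\limpliedby\sum_{i\geq k}\cost_i\blocklit_i\leq u$. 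So the single-variable economy of your plan is exactly what creates the ``case-split bookkeeping'' obstacle you anticipate; the auxiliary variable is the standard way to discharge it, and with that adjustment your argument is the paper's.
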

\begin{proof}[Proof Sketch]
	The first step is to introduce two reification variables $v_\node$ and $v_\node'$ as 
	\begin{align*}
	v_\node &\lequiv \sum_{i\geq k}^n \cost_i \blocklit_i \leq l\text{\quad and} \\
	v_\node' &\lequiv \sum_{i\geq k}^n \cost_i \blocklit_i \leq u.
	\end{align*}
	The next step is showing that these two variables are actually the same; the difficult direction is showing that $v_\node'$ implies $v_\node$.  
	We do this by case splitting on $\blocklit_k$ and deriving each case by contradiction using the defining constraints for $\node_\ltrue$ and $\node_\lfalse$.
	To derive $\olnot{x} + \olnot v_\node' + v_\node \geq 1$, assume that $\blocklit_k$ and $v_\node'$ are true and that $v_\node$ is false.
	The assumption on $\blocklit_k$ and $v_\node'$ together with the defining constraints for $\node_\ltrue$ makes that $v_{\node_\ltrue}$ is true; the assumption that $\blocklit_k$ is true and $v_\node$ is false together with the defining constraints for $\node_\ltrue$ makes that $v_{\node_\ltrue}$ is false. This is clearly a contradiction.
	Using the defining constraints of $\node_\lfalse$, we can in a similar way derive $\blocklit_k + \olnot v_\node' + v_\node \geq 1$.
	A short cutting planes derivation now allows us to derive that $v_\node'$ implies $v_\node$ and hence derive the required defining constraints. 
\end{proof}

Finally, it can happen that due to reducedness of BDDs, a node actually represents multiple constraints. 
The following proposition shows that also this can be proven in VeriPB.
\begin{proposition}
	Consider a node $\node = \bddOf{k}{l}{u}$ and let $k'$ and $l'$ be numbers with $k' < k$ and $l' = l + \sum_{i=k'}^{k-1} \cost_i$ such that the Boolean functions $\sum_{i=k}^n \cost_i \blocklit_i \leq u$ and $\sum_{i=k'}^n \cost_i \blocklit_i \leq u$ are equivalent.
	
	From the defining constraints of $\bddOf{k}{l}{u}$, there is a cutting planes derivation consisting of 
	$6(k - k') + 3$ steps
	that derives the defining constraints for $\bddOf{k'}{l'}{u}$ for reification variable $v_\node$.
\end{proposition}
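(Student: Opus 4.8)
The plan is to stay entirely inside the cutting-planes fragment and to reuse the reification variable $v_\node$ unchanged; the two target constraints turn out to be cheap consequences of the two given defining constraints, so the only substantive work is bounding the length of the derivation. First I would fix the normalized forms. Writing $W_k = \sum_{i\geq k}^n \cost_i$ and $D = W_k - l$, the two defining constraints of $\node = \bddOf{k}{l}{u}$ are
\[
C_1:\ \sum_{i\geq k}^n \cost_i\,\olnot{\blocklit_i} + D\,\olnot{v_\node}\ \geq\ D
\qquad\text{and}\qquad
C_2:\ \sum_{i\geq k}^n \cost_i\,\blocklit_i + (u+1)\,v_\node\ \geq\ u+1,
\]
which express $v_\node\limplies\bigl(\sum_{i\geq k}^n\cost_i\blocklit_i\leq l\bigr)$ and $v_\node\limpliedby\bigl(\sum_{i\geq k}^n\cost_i\blocklit_i\leq u\bigr)$ respectively. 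The target constraints $C_1',C_2'$ for $\bddOf{k'}{l'}{u}$ (again with reification variable $v_\node$) are the analogues with $k$ replaced by $k'$, $l$ by $l'$, and $D$ by $D' = \bigl(\sum_{i\geq k'}^n\cost_i\bigr) - l'$.

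The observation to make up front is that $D' = D$: since $\sum_{i\geq k'}^n\cost_i = \sum_{i=k'}^{k-1}\cost_i + W_k$ and, by hypothesis, $l' = l + \sum_{i=k'}^{k-1}\cost_i$, the extra coefficient mass exactly offsets the larger bound. Hence $C_1'$ is nothing but $C_1$ with the non-negative terms $\cost_i\,\olnot{\blocklit_i}$ for $k'\leq i\leq k-1$ adjoined on the left-hand side, and $C_2'$ is $C_2$ with the non-negative terms $\cost_i\,\blocklit_i$ for $k'\leq i\leq k-1$ adjoined on the left-hand side; in neither case does the degree change (for $C_2$ because only the left-hand side grows, for $C_1$ precisely because $D'=D$). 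So to derive $C_1'$ I would loop over $i = k-1, k-2, \dots, k'$, each time introducing the literal axiom $\olnot{\blocklit_i}\geq 0$, multiplying it by $\cost_i$, and adding it to the constraint accumulated so far, starting from $C_1$; the accumulator after the last addition is exactly $C_1'$. I would derive $C_2'$ by the mirror-image loop, feeding in the literal axioms $\blocklit_i\geq 0$ on the \emph{positive} literals instead. Counting a literal axiom, a multiplication and an addition for each of the $k-k'$ indices in each of the two loops --- that is, $6(k-k')$ operations --- plus a handful of closing normalization/saturation steps, yields the advertised bound of $6(k-k')+3$.

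I do not expect a real obstacle here: the genuinely hard work --- showing that the partial sum skips every value strictly between $l$ and $u$, which is what makes the two defining constraints of $\node$ actually pin down $v_\node$ --- was already discharged when those constraints were derived (Proposition~\ref{prop:mainproofbdd}), and here we merely re-dress them. The only points needing care are the arithmetic identity $D'=D$ and the sign bookkeeping (negated literals $\olnot{\blocklit_i}$ when building $C_1'$, positive literals $\blocklit_i$ when building $C_2'$), so that each literal-axiom addition weakens the relevant constraint only in the harmless direction. It is worth emphasizing in the write-up that the equivalence hypothesis on $\sum_{i=k}^n\cost_i\blocklit_i\leq u$ and $\sum_{i=k'}^n\cost_i\blocklit_i\leq u$ is never actually invoked by the derivation: it is only what makes $\bddOf{k'}{l'}{u}$ a legitimate relabeling of $\node$ --- equivalently, it is the precise condition under which the reduced-BDD construction collapses the splits on $\blocklit_{k'},\dots,\blocklit_{k-1}$ onto $\node$ --- so that the constraints we produce really are that node's intended defining constraints. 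Matching the step count to exactly $6(k-k')+3$ is then just a matter of adhering to the paper's convention for counting literal axioms, multiplications, additions and saturations.
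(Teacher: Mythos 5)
Your derivation is correct and matches the paper's own proof (which is stated for the MDD generalization, where the per-layer weakening uses the at-most-one upper bound $\amoUB$; for singleton layers this collapses exactly to your literal-axiom weakenings $\cost_i\olnot{\blocklit_i}\geq 0$ resp.\ $\cost_i\blocklit_i\geq 0$). The observation that $D'=D$ and that the equivalence hypothesis is only needed to justify the relabeling, not the derivation itself, is also the right reading; the residual $+3$ in the step count corresponds to the final weakening of the $\olnot{v_\node}$ coefficient in the $\limplies$ direction, which in the MDD setting is genuinely needed but in the BDD case degenerates as you note.
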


\paragraph{Generalization to MDDs}

\maxcdcl not only makes use of BDDs for encoding the solution-improving constraint, but also of MDDs. 
The idea is as follows.
In some cases, \maxcdcl can infer implicit at-most-one constraints. 
These are constraints of the form $\sum_{i \in I } \blocklit_i \leq 1$ where the $\blocklit_i$ are literals in the objective $\objective$. 
The detection of such constraints is common in \maxsat solvers, and certification for it has been described by \citet{BBNOV23CertifiedCore-GuidedMaxSATSolving}, so we will not repeat this here. 
Now assume that a set of disjoint at-most-one constraints has been found. 
In an MDD, instead of branching on \emph{single variable} in each node, we will branch on a set of variables for which an at-most-one constraint has previous been derived. 
This means a node does not have two, but $|I|+1$ children: one for each variable in the set and one for the case where none of them is true. 
Otherwise, the construction and ideas remain the same. 
As far as \emph{certification} is concerned: essentially all proofs continue to hold; the main difference is that the case splitting in the proof of \cref{prop:mainproofbdd} will now split into $|I|+1$ cases instead of two and will then use the at most one constraint to derive that the conclusion must hold in exactly one of the cases.

	\section{Experiments}\label{sec:experiments}

We implemented proof logging in the \maxcdcl solver as described by \citet{COLL2025107195}.\footnote{On top of the basic algorithm described above, the experiments include proof logging for a range of techniques included in MaxCDCL \cite{COLL2025107195}, namely clause vivification, equivalent and failed literal detection, clause subsumption and simplification due to unit propagation.}
Each solver call was assigned a single core on a 2x32-core AMD EPYC 9384X. 
The time and memory limits are enforced by runlim.\footnote{https://github.com/arminbiere/runlim} 
As benchmarks, we used the instances of the MaxSAT Evaluation 2024~\cite{url:MSE24}.
The implementation, together with the raw data from the experiments can be found on Zenodo~\cite{VCB26Experimental}.

\paragraph{Overhead in proof logging}
A scatter plot evaluating the overhead in the solving time due to proof logging is shown in Figure~\ref{fig:maxcdcl-mdd8withoutvswithPL}.
The experiments ran with a time limit of 1 hour and a memory limit of 32GB.
Out of the $701$ instances that were solved without proof logging, six instances could not be solved with proof logging. 
Considering instances solved both with and without proof logging, the median overhead of proof logging is $19\%$, which shows that for most instances, proof logging overhead is manageable.
However, we see that for $10\%$ of the instances, \maxcdcl with proof logging takes more than $4.61$ times the time to solve the instance using \maxcdcl without proof logging.
This is in line with earlier work on certifying PB-to-CNF encodings using VeriPB, such as the work of \citet{BBNOPV24CertifyingWithoutLossGeneralityReasoningSolution-Improving}.
Preliminary tests indicate as possible explanation for this phenomenon that writing the defining constraints for a node in an MDD is linear in the number of objective literals.
In practice, this leads to large overhead for instances with a large number of objective literals.
Investigating possible ways to circumvent this is part of planned future work.
\begin{figure}
	\centering
	\includegraphics[scale=0.6]{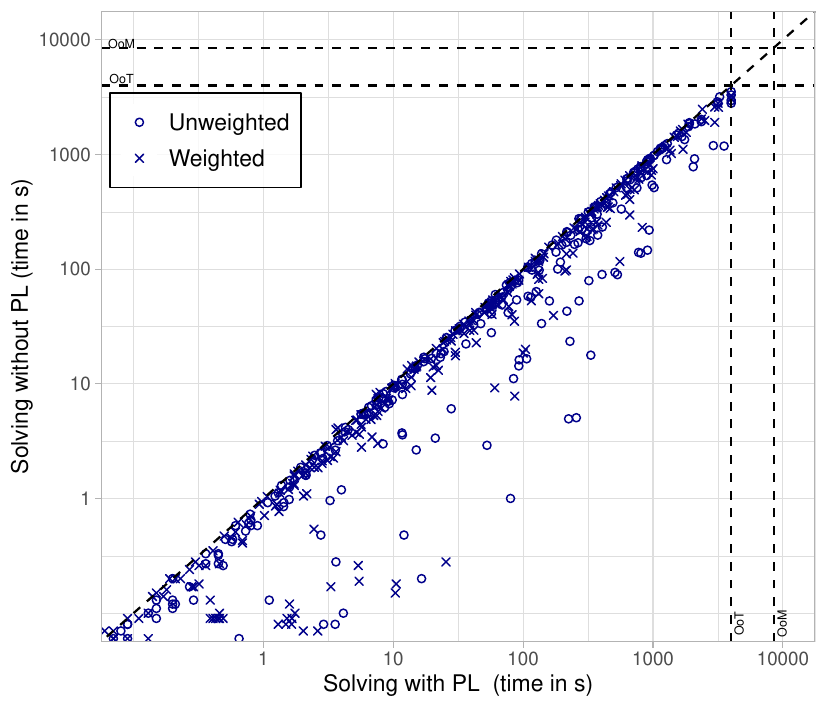}
	\caption{Comparison of solving time with and without proof logging.}
	\label{fig:maxcdcl-mdd8withoutvswithPL}
\end{figure}

\paragraph{Overhead in proof checking}

While optimizing the checking time for the produced proofs was out of scope for this work, we also evaluated it by running the VeriPB proof checker~\cite{url:veripb} with a time limit of 10 hours and a memory limit of 64GB.
Figure~\ref{fig:maxcdcl-mdd8solvingvschecking} shows a scatter plot comparing the solving time with proof logging with the proof checking time.
Out of $695$ produced proofs, $485$ were checked successfully; for the other $210$ instances, the proof checker ran out of time for $202$ instances and out of memory for $8$ instances. In the median proof verification took $42.94$ times the solving time. 
We believe this checking overhead to be too high for making this usable in practice. 
However, there are several ways in which this can be improved.
On the one hand, a new proof checker is currently under development for improved performance~\cite{url:pboxide}. 
Simultaneously, there are ongoing efforts for building a proof \emph{trimmer} for VeriPB proofs. 
The proofs generated by our tools can serve as benchmarks to further guide this ecosystem of VeriPB tools. 
On the other hand, once the proof checker is in a stable state, we can investigate which potential optimizations \emph{to the generated proofs} benefit the proof checking time. 
One thing that comes to mind is using \emph{RUP with hints} instead of plain RUP, so that the checker does not need to perform unit propagation.\footnote{For more information on these rules, see the VeriPB repository \cite{url:veripb}.}
\begin{figure}
	\centering
	\includegraphics[scale=0.6]{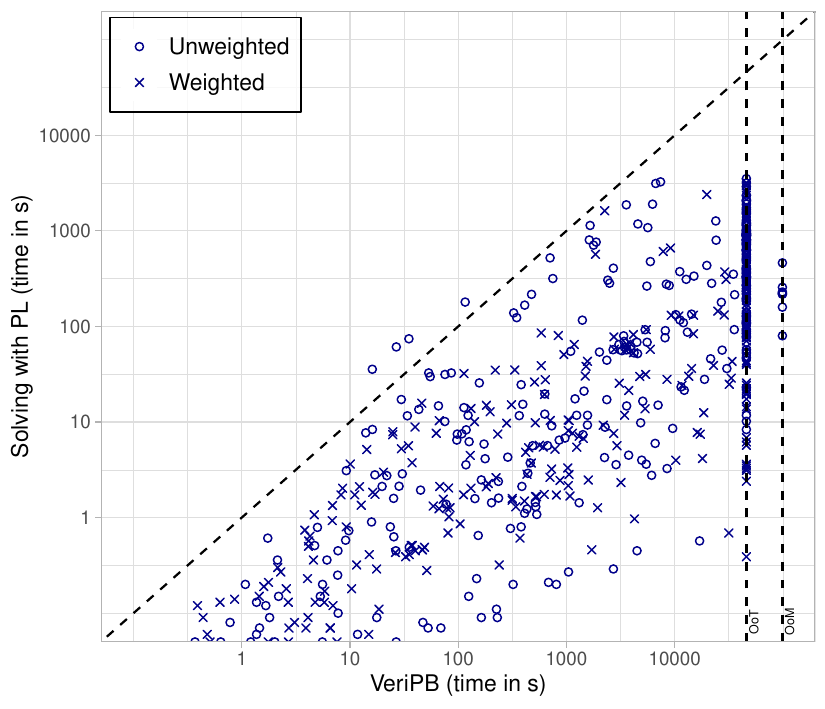}
	\caption{Comparison of solving time (with proof logging enabled) with proof checking time.}
	\label{fig:maxcdcl-mdd8solvingvschecking}
\end{figure}


%
	
	\section{Conclusion}\label{sec:concl}
	In this paper, we demonstrate how to add proof logging to the branch-and-bound MaxSAT solving paradigm using the VeriPB proof system.
To do so, we formalize the paradigm with proof logging in mind, offering a new perspective on how these solvers operate.
To put our formalization to the test, we implemented proof logging in the state-of-the-art branch-and-bound solver \maxcdcl, including the MDD encoding of the solution-improving search constraint, which is created by \maxcdcl to improve its performance.
The main challenge in certifying the MDD encoding turns out to be proving that a node in an MDD represents multiple equivalent Boolean functions.
Experiments show that proof logging overhead is generally manageable, while proof checking performance still leaves room for improvement.

As future work, we plan to add proof logging to literal unlocking, a recent technique \cite{aaai/Li0CHM25} that improves lower bounds during look-ahead. 


	\section*{Acknowledgments}
	This work is partially funded by the European Union (ERC, CertiFOX, 101122653). Views and opinions expressed are however those of the author(s) only and do not necessarily reflect those of the European Union or the European Research Council. Neither the European Union nor the granting authority can be held responsible for them.

In addition, this work is partially funded by the Fonds Wetenschappelijk Onderzoek -- Vlaanderen (projects G064925N and G070521N \& fellowship 11A5J25N),
and by grants PID2021-122274OB-I00 and PID2024-157625OB-I00 funded by MICIU/AEI/10.13039/501100011033/FEDER, UE.

	\appendix
	
	\section{Preliminaries}
	We start this appendix with a slightly more in-depth discussion of the redundance-based strengthening rule. For an even more detailed description, the reader is refered to the work by \citeauthor{BGMN22CertifiedSymmetryDominanceBreakingCombinatorialOptimisation}~(\citeyear{BGMN22CertifiedSymmetryDominanceBreakingCombinatorialOptimisation}).
\veripb allows deriving non-implied constraints with the \emph{redundance-based strengthening rule}:
a generalization of the RAT rule~\cite{JHB12InprocessingRules} and commonly used in proof systems for SAT. 
{
	This rule makes use of a \emph{substitution}, which maps every variable to $0$, $1$, or a literal.
	Applying a substitution $\sub$ on a constraint $C$ results in the constraint ${C}\under{\sub}$	obtained from $C$ by replacing each $x$ by $\sub(x)$.
	\begin{description}
		\item [Redundance-based strengthening:] If $\proofconfig \land \lneg{C}\models \objective\leq \subst{\objective}{\sub} \land \subst{(\proofconfig \land C)}{\sub}$, we add $C$ to $\proofconfig$.
	\end{description}
	Intuitively, this rule can be used to show that $\sub$, when applied to assignments instead of formulas, maps any solution of $\proofconfig$ that does not satisfy $C$ to a solution of $\proofconfig$ that
	does satisfy $C$ and that has an objective value that is at least as good.
}
In this paper, the redundance rule is mainly used for \textbf{reification}: for any PB constraint $C$ and any fresh variable $v$, not used before, two applications of the redundance rule allow us to derive PB constraints that express $v\limplies C$ and $v\limpliedby C$. In other words, when we derive a constraint $\constraint$ from the proof configuration $\proofconfig$ by redundance-based strengthening, we have the guarantee that $\proofconfig$ and $\proofconfig \land \constraint$ are equi-optimal.

In the rest of this paper, we only use reification for constraints $C$ of the form $\sum_i a_i x_i \leq A$. The \emph{PB constraints representing the reification} are denoted as 
\begin{align}
	&\Creif{\limplies}{v} := ((\sum_i a_i) - A) \olnot v - \sum_i a_i x_i \geq -A \text{, and} \label{eq:def:rightreif} \\
	&\Creif{\limpliedby}{v} := (A+1) v + \sum_i a_i x_i \geq A+1 \label{eq:def:leftreif}.
\end{align}
Observe that \eqref{eq:def:rightreif} indeed coincides with $v \limplies C$: if $v$ is true, then $\Creif{\limplies}{v}$ is equivalent with $C$; if $v$ is false, $\Creif{\limplies}{v}$ is trivially true.
Similarly, one can convince themselves that $\Creif{\limpliedby}{v}$ coincides with $v \limpliedby C$ by observing that $\Creif{\limpliedby}{v}$ actually means $\olnot v \limplies \sum_i a_i x_i \geq A+1$.
Indeed, since $v$ was a fresh variable that was not used before, for every solution to $\proofconfig$ we can assign a value to $v$ such that  $\proofconfig \land \Creif{\limpliedby}{v} \land \Creif{\limplies}{v}$ is satisfied. 

To end this section, we will show that the redundance rule also allows us to derive implied constraints by a \textbf{proof by contradiction}, namely that if we have a cutting planes derivation that shows that $\proofconfig \land \lneg \constraint \models 0 \geq 1$, then we can add $\constraint$ to $\proofconfig$. {In practice, this is mimicked by applying the \emph{redundance-based strengthening rule} with an empty witness.}
{
	To see that this is indeed correct, consider the definition of the redundance-based strengthening proof rule: if $\sub$ is the empty witness, then we need to prove that $\proofconfig \land \lneg \constraint  \models \constraint$, which in practice boils down to showing that from the constraints $\proofconfig \land \lneg \constraint \land \lneg \constraint$, a contradiction can be derived by a Cutting Planes derivation.
	In what follows, we will say that we prove a constraint by contradiction, while in practice, we prove this constraint by redundance-based strengthening with an empty witness.
}

	\section{Certifying Branch-And-Bound with Look-Ahead} \label{sec:maxcdcl-appendix}
	We start by giving the proof for Theorem~\ref{thm:proofmain}

\setcounternextresult{thm:proofmain}
\begin{theorem}
	Let $\coreset$ be an $\objective$-compatible set of weighted local cores and let $|\objective|$ be the number of objective literals. 
	\begin{enumerate}
		\item If $\weightInObj{\coreset} \geq \bestval$, then there is a cutting planes derivation that derives $\coresetclause$ from the constraint $\objective\leq\bestval-1$ and the constraints $\weightedLocalCoreClause$ for every $q \in \coreset$ using at most 
		$|\objective|$ literal axioms, $|\objective| + |\coreset|$ multiplications, $|\objective| + |\coreset|$ additions, and 1 division.
		\item If for some unassigned objective literal $\lit$, 
		$ \residualw{(\lit,\coreset)} + \weightInObj{\coreset}  \geq \bestval, $
		then there is a cutting planes derivation 
		that derives $\coresetclause \lor \olnot{\lit}$
		from the constraint $\objective\leq\bestval-1$ and the constraints $\weightedLocalCoreClause$ for every $q \in \coreset$ using at most 
		$|\objective|-1$ literal axioms, $|\objective| + |\coreset|-1$ multiplications, $|\objective| + |\coreset| - 1$ additions, and 1 division.
	\end{enumerate}
\end{theorem}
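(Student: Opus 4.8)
The plan is to give, for each of the two cases, an explicit cutting-planes derivation in the normalized pseudo-Boolean world. Write $\coreset=\{\weightedLocalCore_1,\dots,\weightedLocalCore_m\}$ with $\weightedLocalCore_j=\langle\weight_j,\reason_j,\core_j\rangle$, and for $\weightedLocalCore=\langle\weight,\reason,\core\rangle\in\coreset$ recall $\weightedLocalCoreClause=\bigvee_{\lit\in\reason\cup\core}\olnot\lit$, which by Proposition~\ref{prop:localcoreclauseRUP} is available; since the members of $\core$ are negations of objective literals, $\weightedLocalCoreClause$ equals $\sum_{\lit\in\reason}\olnot\lit+\sum_{\olnot\blocklit\in\core}\blocklit\geq 1$. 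For an objective literal $\blocklit$ put $c_\blocklit:=\sum_{\weightedLocalCore\in\coreset,\ \olnot\blocklit\in\core}\weight$; by $\objective$-compatibility $c_\blocklit=\weightInObj{\blocklit}-\residualw(\blocklit,\coreset)\le\weightInObj{\blocklit}$, and for a reason literal $\lit$ put $\mu_\lit:=\sum_{\weightedLocalCore\in\coreset,\ \lit\in\reason}\weight$, so $0<\mu_\lit\le\weightInObj{\coreset}$. The first block of the derivation is common to both cases: multiply each $\weightedLocalCoreClause$ by its weight $\weight$ (at most $|\coreset|$ multiplications, skipping $\weight=1$) and add the $m$ results (here $|\coreset|-1$ additions), obtaining
\begin{equation}\tag{$\star$}
\sum_{\lit}\mu_\lit\,\olnot\lit\;+\;\sum_{\blocklit}c_\blocklit\,\blocklit\;\geq\;\weightInObj{\coreset},
\end{equation}
where $\lit$ ranges over the reason literals and $\blocklit$ over all objective literals (an objective literal in no core contributes $c_\blocklit=0$). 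In the situation produced by Algorithm~\ref{alg:lookahead} the reason literals lie on variables disjoint from the still-relevant objective literals, so these two sums are over disjoint literal sets; this is the only case we need.

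For case 1, I would weaken the solution-improving constraint $\objective\leq\bestval-1$ by adding, for each objective literal $\blocklit$ with $\residualw(\blocklit,\coreset)>0$, the literal axiom $\blocklit\geq 0$ scaled by $\residualw(\blocklit,\coreset)$ (at most $|\objective|$ literal axioms, at most $|\objective|$ multiplications, $|\objective|$ additions); this turns it into $\sum_\blocklit c_\blocklit\blocklit\leq\bestval-1$. Adding that to $(\star)$ (one addition) cancels every objective-literal term, since $c_\blocklit\blocklit+c_\blocklit\olnot\blocklit$ is the constant $c_\blocklit$, and leaves $\sum_\lit\mu_\lit\olnot\lit\geq\weightInObj{\coreset}-\bestval+1\geq 1$, the final inequality being the hypothesis. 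A single division by $\weightInObj{\coreset}$ then collapses every coefficient and the degree to $1$ — here one uses $\mu_\lit\le\weightInObj{\coreset}$ and $1\le\weightInObj{\coreset}-\bestval+1\le\weightInObj{\coreset}$, the latter needing only $\bestval\geq1$, which may be assumed since $\bestval=0$ makes $\objective\leq\bestval-1$ immediately contradictory — yielding exactly $\coresetclause$. The tally is $|\objective|$ literal axioms, $|\objective|+|\coreset|$ multiplications, $|\objective|+|\coreset|$ additions, and $1$ division.

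Case 2 is the same except that the hardening literal of the statement, here called $\blocklit_0$, is a distinguished objective literal that is \emph{not} weakened. I would weaken $\objective\leq\bestval-1$ only at the objective literals $\blocklit\neq\blocklit_0$ (at most $|\objective|-1$ literal axioms, multiplications and additions), keeping $\weightInObj{\blocklit_0}$ as the coefficient of $\blocklit_0$, and add the result to $(\star)$ (one addition). The terms for $\blocklit\neq\blocklit_0$ cancel as before, while at $\blocklit_0$ one is left with $c_{\blocklit_0}\blocklit_0+\weightInObj{\blocklit_0}\olnot{\blocklit_0}=c_{\blocklit_0}+\residualw(\blocklit_0,\coreset)\olnot{\blocklit_0}$; hence the derived constraint is $\sum_\lit\mu_\lit\olnot\lit+\residualw(\blocklit_0,\coreset)\olnot{\blocklit_0}\geq\weightInObj{\coreset}+\residualw(\blocklit_0,\coreset)-\bestval+1\geq 1$, now using the case-2 hypothesis. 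Dividing by $\weightInObj{\coreset}+\residualw(\blocklit_0,\coreset)$ gives $\coresetclause\lor\olnot{\blocklit_0}$, with one fewer step of each kind: $|\objective|-1$ literal axioms, $|\objective|+|\coreset|-1$ multiplications, $|\objective|+|\coreset|-1$ additions, and $1$ division.

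The main obstacle, and the only step needing genuine care, is the concluding division: I must check that the chosen divisor simultaneously dominates every remaining coefficient and the degree, so that all ceilings equal $1$ and the derivation lands exactly on the target clause, neither weaker nor stronger. This is precisely where $\objective$-compatibility enters — it is what forces the reason-literal coefficients $\mu_\lit$ (and, in case 2, the surviving $\residualw(\blocklit_0,\coreset)$) to be bounded by $\weightInObj{\coreset}$, respectively by the divisor — while $\bestval\geq1$ controls the degree. A secondary point is disjointness of the reason literals from the objective literals being cancelled in $(\star)$; this holds for the output of Algorithm~\ref{alg:lookahead}, so it causes no trouble, but it is worth flagging since the theorem is phrased for an arbitrary $\objective$-compatible set.
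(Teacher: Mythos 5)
Your proposal is correct and follows essentially the same route as the paper's own proof: form the weighted sum of the core clauses, weaken the solution-improving constraint by residual-weight-scaled literal axioms (omitting the hardening literal in case~2), add, and divide; the step counts also match. The only difference is cosmetic — you pin down an explicit divisor and verify the ceilings, where the paper simply divides "by a large enough number," and you explicitly flag the reason/objective-literal disjointness that the paper leaves implicit.
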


\begin{proof}
	We start by showing the first case. Multiplying the clauses $\constraint_{\coretriple}$ for every $\coretriple \in \coreset$ by their weight $w$ and adding them up gives 
	\begin{align}
		\sum_{\coretriple \in \coreset} \left(\sum_{\lit \in R} w \olnot{\lit} + \sum_{\olnot{\lit} \in K} w {\lit}  \right) \geq \weightInObj{\coreset}
	\end{align}
	which is equivalent to 
	\begin{align}
	\label{eq:sumlocalcores}
	\sum_{\coretriple \in \coreset, \lit \in R} w \olnot{\lit} + \sum_{\coretriple \in \coreset, \olnot{\lit} \in K} w {\lit} \geq \weightInObj{\coreset}
	\end{align}
	By adding literal axioms ${\lit} \geq 0$ for every objective literal $\lit$, multiplied by $\residualw({\lit}, \coreset)$ to the solution-improving constraint $\objective \leq \bestval-1$ we can get
	\begin{align}
		\label{eq:weakenedobjectiveimprovingconstraint}
		\sum_{\coretriple \in \weightedLocalCore, \olnot{\lit} \in K}  w {l} \leq \bestval - 1
	\end{align}
	Addition of \eqref{eq:sumlocalcores} and \eqref{eq:weakenedobjectiveimprovingconstraint} gives
	\begin{align}
		\sum_{\coretriple \in \weightedLocalCore, l \in R} w \olnot{l} \geq \weightInObj{\coreset} - \bestval + 1
	\end{align}
	Since the first case applies, we know that $\weightInObj{\coreset} \geq \bestval$, so the right-hand side is strictly positive. Division by a large enough number (and rounding up) results in the clause $\weightedLocalCoreClause$.
	
	\newcommand{\litp}{\m{\lit'}}
	Let us now assume the second case applies for some objective literal $\litp$, and thus we need to derive $\coresetclause \lor \olnot{\litp}$. We start by adding literal axioms ${\lit} \geq 0$ multiplied by $\residualw({l}, \coreset)$ for every objective literal ${\lit} \neq \litp$ to the solution-improving constraint $\objective \leq \bestval-1$ to get
	\begin{align}
		\label{eq:weakenedobjectiveimprovingconstraint2}
		\residualw(\litp, \coreset) \litp + \sum_{\coretriple \in \coreset, \olnot{\lit} \in K} w \lit  \leq \bestval - 1 
	\end{align}
	By addition of \eqref{eq:weakenedobjectiveimprovingconstraint2} with \eqref{eq:sumlocalcores}, which we can derive by taking the same linear combination of the local cores in $\coreset$, we get
	\begin{align}
		\sum_{\coretriple \in \coreset, l \in R} w \olnot{\lit} - \residualw(\litp, \coreset) \litp \geq \weightInObj{\coreset} - \bestval + 1
	\end{align}
	which is, given that $\litp = 1 - \olnot{\litp}$, equivalent to
	\begin{align}
		\sum_{\coretriple \in \coreset, l \in R} w \olnot{\lit} &+ \residualw(\litp, \coreset) \olnot{\litp}  \nonumber \\ & \geq \weightInObj{\coreset} - \bestval + 1 + \residualw(\litp, \coreset)
	\end{align}
	Now, since the second case applies for $\litp$, we know that $\weightInObj{\coreset} + \residualw(\coreset, \litp) \geq \bestval$, so the right-hand side is positive. Hence, division by a large enough number (and rounding up) gives $\weightedLocalCoreClause \lor \olnot{\litp}$.
\end{proof}

Next, we give a more in-depth example of the look-ahead procedure, as well as how the proof logging works.

\begin{example}
	Let us consider a MaxSAT instance with objective $\objective = 5 x_1 + 9 x_2 + 6 x_3 + 2 x_4 + 6 x_5$. The current best found solution has objective value $\bestval = 12$ and the current partial assignment when the solver starts Algorithm~\ref{alg:lookahead} is $\{x_1, \olnot{y_1}, y_2, y_3\}$. We start this example with a trace of Algorithm~\ref{alg:lookahead}.

	\textbf{Line 3-4} Add $q_0 = \langle 5, \{x_1\}, \{\olnot{x_1}\} \rangle$ to $\coreset$.
	
	\textbf{Line 5-7} Take some literal $l$ with $\residualw(l, \coreset) > 0$. Take $x_2$. Propagate $\olnot{x_2}$. 
	
	\textbf{Line 8-11} No conflict propagated, so search for local core continues.
	
	\textbf{Line 5-7} Take next literal $l$ with $\residualw(l, \coreset) > 0$. Take $x_3$. Propagate $\olnot{x_3}$. 
	
	\textbf{Line 8-11} A conflict was propagated. Found local core $K = \{\olnot{x_2}, \olnot{x_3}\}$.
	
	\textbf{Line 12-13} Conflict analysis: assignment $\{\olnot{y_1}, y_3, \olnot{x_2}, \olnot{x_3}\}$ propagates to conflict. $R = \{\olnot{y_1}, y_3\} , K = \{x_2, x_3\}$, $w_1 = \min\{\residualw(x_2, \coreset), \residualw(x_3, \coreset)\} = 6$. 
	
	\textbf{Line 14} Add core $q_1 = \langle 6, \{\olnot{y_1}, y_3\}, \{\olnot{x_2}, \olnot{x_3}\}$ to $\coreset$. Residual weight update: $\residualw(x_2, \coreset) = 3, \residualw(x_3, \coreset) = 0$. 
	
	\textbf{Line 15} $\sum_{\coretriple \in \coreset} w < \bestval$, continue search for local cores.
	
	\textbf{Line 5-7} Take some literal $l$ with $\residualw(l, \coreset) > 0$. Take $x_2$. Propagate $\olnot{x_2}$. 
	
	\textbf{Line 8-11} No conflict propagated, so search for local core continues. 
	
	\textbf{Line 5-7} Take some literal $l$ with $\residualw(l, \coreset) > 0$. Take $x_4$. Propagate $\olnot{x_4}$.  
	
	\textbf{Line 8-11} $x_5$ propagated true. Found local core $K = \{\olnot{x_2}, \olnot{x_4}, \olnot{x_5}\}$.
	
	\textbf{Line 12-13} Conflict analysis: assignment $\{\olnot{y_1}, y_2, \olnot{x_4}, \olnot{x_5}\}$ propagates to conflict. $R = \{\olnot{y_1}, y_2\}, K = \{\olnot{x_4}, \olnot{x_5}\}$, $w_2 = \min\{\residualw(x_4, \coreset), \residualw(x_5, \coreset)\} = 2$.
	
	\textbf{Line 14} Add core $q_2 = \langle 2, \{\olnot{y_1}, y_2\}, \{\olnot{x_4}, \olnot{x_5}\}$ to $\coreset$. 
	
	\textbf{Line 15} $\sum_{\coretriple \in \coreset} w \geq \bestval$, soft conflict found.
	
	\textbf{Line 18} Introduce clause $\coresetclause = \olnot{x_1} \lor y_1 \lor \olnot{y_3} \lor y_2$. Result $\codecall{analyze}( \coresetclause)$ is $\olnot{x_1} \lor \olnot{y_2}$, which is learned by the solver.
	
	\paragraph{Proof logging}
	We will now show how to derive $\coresetclause$ by a cutting planes derivation. After that, the learned clause can be derived by RUP, since it is found by standard CDCL from $\coresetclause$.
	
	First, we derive by RUP the clauses $\clause_\weightedLocalCore$ for every core $q \in \coreset$:
	\begin{align}
		\clause_{\weightedLocalCore_1} := y_1 + \olnot{y_3} + x_2 + x_3 \geq 1 \\
		\clause_{\weightedLocalCore_2} := y_1 + \olnot{y_2} + x_4 + x_5 \geq 1 
	\end{align}
	Addition of $\clause_{\weightedLocalCore_1}$, multiplied by $w_1$, with $\clause_{\weightedLocalCore_2}$, multiplied by $w_2$ gives
	\begin{align}
		8 y_1 + 6 \olnot{y_3} +  2 \olnot{y_2} + 6 x_2 +  6 x_3 + 2 x_4 + 2 x_5 \geq 8 \label{eq:example:sumofweightedcores}
	\end{align}
	By adding literal axioms literal axioms $x_2 \geq 0$ multiplied by $3$ and $x_5 \geq 0$ multiplied by 4 to the solution-improving constraint, we get
	\begin{align}
		5 x_1 + 6 x_2 + 6 x_3 + 2 x_4 + 2 x_5 \leq 11 \label{eq:example:weakenedsic}
	\end{align}
	Addition of \eqref{eq:example:sumofweightedcores} and \eqref{eq:example:weakenedsic} gives
	\begin{align}
		8 y_1 + 6 \olnot{y_3} +  2 \olnot{y_2} \geq -3 + 5 x_1
	\end{align}
	Since $x_1 = 1 - \olnot{x_1}$, we can rewrite this to
	\begin{align}
		5 \olnot{x_1} + 8 y_1 + 6 \olnot{y_3} +  2 \olnot{y_2} \geq 2
	\end{align}
	Division by $8$ and rounding up gives indeed $\coresetclause$.
\end{example}

	\section{Certifying CNF Encodings based on Binary (and Multi-Valued) Decision Diagrams}\label{sec:mdd-appendix}\label{sec:MDD-appendix}
	
\subsection{Generalizing BDD encodings to MDD encodings}
\emph{Multi-valued Decision Diagrams} (or MDD in short) are a generalization of the Binary Decision Diagrams that were discussed in Section~\ref{sec:MDD}.
Let us assume that we will encode the solution-improving constraint $\sum_{i=1}^{n} \cost_i \blocklit_i < \bestval$ and that we have a partition of the integers from $1$ to $n$, denoted as $\langle \litspartition_1, \cdots, \litspartition_m \rangle$, such that for every $\litspartition_j$, we have the constraint $\amo(\litspartition_j) = \sum_{i \in \litspartition_j} \blocklit_i \leq 1$.
We will call a \emph{layer} an index $j$ between $1$ and $m$. 
A node $\node = \mddOf{k}{l}{u}$ for some layer $k$ and degree interval $[l,u]$ represents the boolean function $\sum_{j = k}^{m} \sum_{i \in \litspartition_j} \cost_i \blocklit_i \leq [l,u]$ and has $|\litspartition_j| + 1$ children: one for each literal $\blocklit_i$ with $i \in \litspartition_j$ representing when $\blocklit_i$ is true, and an else node $\node_\lfalse$, for when all literals in $\litspartition_j$ are false.
For every node $\node = \mddOf{k}{l}{u}$ and literal $\blocklit_i \in \litspartition_k$, we write $\child{\node}{\blocklit_i}$ for the child of $\node$ with an edge labeled $\blocklit_i$ and $\child{\node}{\lfalse}$ for the else-child.

\paragraph*{Phase 1: Construction of an MDD for $\obj \leq \bestval-1$} 
The creation of the MDD is formalized in Proposition~\ref{prop:mddcreation}. Observe that since this is a generalization of the creation of BDDs, as it was discussed in Section~\ref{sec:mdd}, the same dynamic programming approach can be followed to create the MDD.

\begin{proposition}[\cite{BCSV20MDD-basedSATencodingpseudo-Booleanconstraintsat-most-one, ANORM12NewLookBDDsPseudo-BooleanConstraints}]
	\label{prop:mddcreation}
	If $k$ is a layer and  $\{[l_i, u_i] \mid i \in \litspartition_{k}\} \cup \{[l_\lfalse, u_\lfalse]\}$ is a set of $|\litspartition_{k}|+1$ degree intervals for $\sum_{j \geq k+1} \sum_{i \in \litspartition_j} \cost_i \blocklit_i$, then 
	\[
		[\max(\{l_i + \cost_i \mid i \in \litspartition_k\} \cup \{l_\lfalse\}), \min(\{u_i + \cost_i \mid i \in \litspartition_k\} \cup \{u_\lfalse\}) ]
	\]
	is a degree interval for $\sum_{j \geq k} \sum_{i \in \litspartition_j} \cost_i \blocklit_i$.
\end{proposition}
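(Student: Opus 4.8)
The plan is to unfold the definition of \emph{degree interval} and reduce the claim to a one-step case analysis driven by the at-most-one constraint on layer $k$. Write $E_k := \sum_{j \geq k}\sum_{i \in \litspartition_j}\cost_i\blocklit_i$, and abbreviate $L := \max(\{l_i + \cost_i \mid i \in \litspartition_k\} \cup \{l_\lfalse\})$ and $U := \min(\{u_i + \cost_i \mid i \in \litspartition_k\} \cup \{u_\lfalse\})$. By definition, $[L,U]$ is a degree interval for $E_k$ precisely when, for every integer $A$ with $L \leq A \leq U$ and every assignment $\assignment$ satisfying all the fixed at-most-one constraints $\amo(\litspartition_j)$, we have $E_k(\assignment) \leq A$ iff $E_k(\assignment) \leq U$. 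If $L > U$ the interval is empty and the statement is vacuous, so I assume $L \leq U$; the implication $E_k(\assignment) \leq A \Rightarrow E_k(\assignment) \leq U$ is then immediate from $A \leq U$, so only the converse needs an argument.

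The key step is the layer-wise decomposition. Fix such an $A$ and $\assignment$. Since $\assignment \models \amo(\litspartition_k)$, either $\assignment$ falsifies every $\blocklit_i$ with $i \in \litspartition_k$, in which case the layer-$k$ contribution is $0$, so $E_k(\assignment) = E_{k+1}(\assignment)$ and I work with the else-child interval $[l_\lfalse,u_\lfalse]$; or there is a unique $i_0 \in \litspartition_k$ with $\assignment(\blocklit_{i_0}) = 1$, in which case $E_k(\assignment) = \cost_{i_0} + E_{k+1}(\assignment)$ and I work with $[l_{i_0},u_{i_0}]$. Handle the second case (the first is identical with $\cost_{i_0}$ replaced by $0$ and $[l_{i_0},u_{i_0}]$ by $[l_\lfalse,u_\lfalse]$): from $l_{i_0} + \cost_{i_0} \leq L \leq A \leq U \leq u_{i_0} + \cost_{i_0}$ and $L \leq U$ one gets $l_{i_0} \leq A - \cost_{i_0} \leq u_{i_0}$ and $l_{i_0} \leq U - \cost_{i_0} \leq u_{i_0}$, so both $A - \cost_{i_0}$ and $U - \cost_{i_0}$ lie in the degree interval $[l_{i_0},u_{i_0}]$ of $E_{k+1}$. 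Hence $E_{k+1}(\assignment) \leq A - \cost_{i_0} \iff E_{k+1}(\assignment) \leq u_{i_0} \iff E_{k+1}(\assignment) \leq U - \cost_{i_0}$, and adding $\cost_{i_0}$ throughout yields $E_k(\assignment) \leq A \iff E_k(\assignment) \leq U$, as required. (Equivalently — and closer to how the solver's VeriPB derivation is phrased — one can argue by contradiction: an $\assignment$ with $A < E_k(\assignment) \leq U$ would give $A - \cost_{i_0} < E_{k+1}(\assignment) \leq U - \cost_{i_0}$, a value of $E_{k+1}$ strictly inside $(l_{i_0},u_{i_0}]$, contradicting that $[l_{i_0},u_{i_0}]$ is a degree interval.)

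I expect the only delicate points to be routine bookkeeping: tracking the half-open interval $(L,U]$, keeping straight which of the $|\litspartition_k|+1$ inequalities $l_i + \cost_i \leq L$ and $U \leq u_i + \cost_i$ is being invoked for the branch actually taken by $\assignment$, and disposing of the degenerate case $L > U$. The structural content is entirely the decomposition along layer $k$, and that is exactly the place where the at-most-one constraint enters, since it guarantees at most one nonzero contribution from $\litspartition_k$. One should also note that the children's degree-interval properties are only ever applied to assignments that already satisfy all the at-most-one constraints, which holds automatically here because $\assignment$ does. Specializing to the case where every block $\litspartition_j$ is a singleton recovers the BDD statement of \citet{ANORM12NewLookBDDsPseudo-BooleanConstraints} quoted earlier, which serves as a sanity check.
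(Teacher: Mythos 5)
Your proof is correct. Note that the paper itself gives no proof of this proposition — it is imported wholesale from the works it cites — so there is nothing to contrast against; your argument (split on layer $k$ using the at-most-one constraint to get either $E_k(\assignment)=E_{k+1}(\assignment)$ or $E_k(\assignment)=\cost_{i_0}+E_{k+1}(\assignment)$, then shift the corresponding child interval) is exactly the reasoning underlying the cited dynamic-programming construction, and the interval arithmetic checks out.
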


\paragraph*{Phase 2: A CNF Encoding From the MDD} 
Given an MDD that represents a PB constraint, constructed as discussed above, we can get a CNF encoding as follows. 
\begin{itemize}
	\item For each internal node $\node$ at layer $j$ and child node $\node_c = \child{\node}{\blocklit_i}$ for any $i \in \litspartition_j$, the clauses
	\begin{align} \olnot \blocklit_i +  v_{\node_c} + \olnot v_\node & \geq 1 \label{eq:mdd:clauseone}
	\end{align} are added and for the else-child $\node_\lfalse = \child{\node}{\lfalse}$, the clause
	\begin{align}
	v_{\node_\lfalse} + \olnot v_\node & \geq 1
	\end{align}   
	is added.
	
	The first clause expresses that if $\blocklit_i$ is true and $v_{\node_c}$ is false, then so is $v_\node$. 
	The second clause expresses that whenever $v_{\node_\lfalse}$ is false, so is $v_\node$ (this does not hold for MDDs in general, but only for the specific ones generated here, where we know $\node$ to represent a constraint \mbox{$\sum_{i\geq k}^n\cost_i\blocklit_i\leq A$} and $\node_\lfalse$ represents $\sum_{i\geq k+1}^n\cost_i\blocklit_i\leq A$, both with positive coefficients). 
	\item Finally, for the top node $\node_\top$ representing $\obj \leq \bestval-1$, the unit clause $v_\top$ is added. 
\end{itemize}

\subsection{Proofs for MDD encodings}

We start by considering the at-most-one constraint $\amo(\litspartition_k)$ for some layer $k$, and derive an upper bound on the objective value for layer $k$.
\begin{aproposition}
	If $k$ is a layer, the constraint 
	$\amoUB(\litspartition_k) := \sum_{i \in \litspartition_k} \cost_i \blocklit_i \leq \max(\{\cost_i \mid i \in \litspartition_k\})$
	can be derived
	from the constraint $\amo(\litspartition_k)$ using at most 
	$|\litspartition_k|$ multiplications
 	$|\litspartition_k| - 1$ literal axioms and 
 	$|\litspartition_{k} - 1$ additions. 
\end{aproposition}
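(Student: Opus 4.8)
The plan is to give a short cutting-planes derivation that starts from the normalized at-most-one constraint and successively lowers the coefficient of each $\olnot{\blocklit_i}$ from the maximum $M := \max\{\cost_i \mid i \in \litspartition_k\}$ down to $\cost_i$. First I would rewrite $\amo(\litspartition_k)$ in normalized form as $\sum_{i \in \litspartition_k} \olnot{\blocklit_i} \geq |\litspartition_k| - 1$ and multiply it by $M$, obtaining $\sum_{i \in \litspartition_k} M\,\olnot{\blocklit_i} \geq M(|\litspartition_k|-1)$; this is the single ``free'' multiplication.

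Next I would fix an index $i^{*} \in \litspartition_k$ with $\cost_{i^{*}} = M$. For every other $i \in \litspartition_k$ with $\cost_i < M$, take the literal axiom $\blocklit_i \geq 0$, multiply it by $M - \cost_i$, and add the result to the running constraint. Since $\blocklit_i + \olnot{\blocklit_i} = 1$, adding $(M-\cost_i)\,\blocklit_i \geq 0$ rewrites $M\,\olnot{\blocklit_i}$ as $\cost_i\,\olnot{\blocklit_i}$ while decreasing the degree by exactly $M - \cost_i$. After all such $i$ are processed, the left-hand side equals $\sum_{i \in \litspartition_k} \cost_i\,\olnot{\blocklit_i}$ (the coefficient of $\olnot{\blocklit_{i^{*}}}$ was already $M = \cost_{i^{*}}$, and any further literal tied at the maximum needs no step) and the degree equals $M(|\litspartition_k|-1) - \sum_{i \neq i^{*}}(M - \cost_i) = \sum_{i \neq i^{*}}\cost_i = \sum_{i \in \litspartition_k}\cost_i - M$. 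Translating this $\geq$-constraint back into $\leq$-form yields precisely $\amoUB(\litspartition_k)$.

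Finally I would tally the steps: one multiplication to scale $\amo(\litspartition_k)$, and then, for each of the at most $|\litspartition_k| - 1$ indices $i \neq i^{*}$, one literal axiom, one multiplication, and one addition --- giving at most $|\litspartition_k|$ multiplications, $|\litspartition_k| - 1$ literal axioms, and $|\litspartition_k| - 1$ additions, as claimed. I do not expect a genuine obstacle here; the only points needing care are verifying how the degree drops under each addition and the edge case where several literals attain the maximal coefficient, which is harmless because the bound is stated as ``at most''.
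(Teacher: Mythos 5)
Your proposal is correct and follows essentially the same derivation as the paper: multiply the (normalized) at-most-one constraint by the maximal weight $M$ and then add suitably scaled literal axioms $\blocklit_i \geq 0$ to lower each coefficient to $\cost_i$. Your explicit degree bookkeeping and the observation that the maximum-attaining index can be skipped (so the count stays within $|\litspartition_k|-1$ axioms) are in fact slightly more careful than the paper's one-sentence argument.
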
 
\begin{proof}
	Multiplying $\amo(\litspartition_k)$ by $\max(\{\cost_i \mid i \in \litspartition_k\})$ and adding literal axioms $\blocklit_i \geq 0$ multiplied by $\max(\{\cost_i \mid i \in \litspartition_k\}) - \cost_i$ for every $i \in \litspartition_k$ gives the desired constraint.
\end{proof}

We now introduce the defining constraints for a node formally.
\begin{adefinition}
	Let $\node = \mddOf{k}{l}{u}$ for some layer $k$ that represents the Boolean function $\sum_{j \geq k} \sum_{i \in \litspartition_j} \cost_i \blocklit_i \leq [l,u]$. 
	The \emph{defining constraints} for node $\node$ are the constraints representing the reification
	\begin{align}
		v_\node \limplies \sum_{j \geq k} \sum_{i \in \litspartition_j} \cost_i \blocklit_i \leq l \label{eq:reif_node:right} \\
		v_\node \limpliedby \sum_{j \geq k} \sum_{i \in \litspartition_j} \cost_i \blocklit_i \leq u \label{eq:reif_node:left}
	\end{align}  
\end{adefinition}

The implications \eqref{eq:reif_node:right} and \eqref{eq:reif_node:left} are in pseudo-Boolean form equal to
\begin{align}
	&((\sum_{j \geq k} \sum_{i \in \litspartition_j} \cost_i) - l) \olnot{v_\node} - \sum_{j \geq k} \sum_{i \in \litspartition_j} \cost_i \blocklit_i \geq -l \text{, and}  \\
	&(u+1) v_\node + \sum_{j \geq k} \sum_{i \in \litspartition_j} \cost_i \blocklit_i \geq u+1 \text{,}
\end{align}
which will be referred to as $\Cdef{\limplies}{v_\node}{k}{l}{u}$ and $\Cdef{\limpliedby}{v_\node}klu$,
respectively. 

\paragraph{Deriving the clauses in the MDD encoding}
From these constraints, we can derive the clauses representing the MDD easily. We will show this first, before moving on to explaining how to derive the defining constraints for a node.

\setcounternextresult{prop:clauses-mdd}
\begin{proposition}
	Let $\node = \mddOf{k}{l}{u}$ be an internal node, $\node_c = \child{\node}{\blocklit_z}$ a child of $\node$ for some $z \in \litspartition_{k}$ and $\node_\lfalse = \child{\node}{\lfalse}$ the else-child for $\node$, then the clauses
	\begin{align}
		\olnot \blocklit_i + v_{\node_c} + \olnot{v_\node} \geq 1 \label{eq:mdd:clause_var} \\
		v_{\node_\lfalse} + \olnot{v_\node} \geq 1 \label{eq:mdd:clause_else}		
	\end{align} 
	can be derived by a cutting planes derivation consisting of $2 |\litspartition_{k}| + 1$ additions, $2 |\litspartition_{k}| - 1$ literal axioms, $2 |\litspartition_{k}| - 1$ multiplications and $2$ divisions.
\end{proposition}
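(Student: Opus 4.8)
The plan is to derive each of the two clauses with the same two-step recipe that works in the BDD case of \cref{prop:clauses-mdd}: first take a positive linear combination of two defining constraints so that the ``tail'' $\sum_{j\geq k+1}\sum_{i\in\litspartition_j}\cost_i\blocklit_i$ cancels; then add literal axioms to clear the layer-$k$ literals that remain (all of them for \eqref{eq:mdd:clause_else}, all but the edge literal $\blocklit_z$ for \eqref{eq:mdd:clause_var}); and finally apply one division to normalise the result to a clause. The $|\litspartition_k|$ factor in the step count is exactly the number of objective literals carried by layer $k$. An important point is that, in contrast with the derivation of the defining constraints themselves, no at-most-one reasoning is needed here: simply dropping the non-negative layer-$k$ terms already gives a strong enough bound.

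Concretely, for \eqref{eq:mdd:clause_else} I would add $\Cdef{\limplies}{v_\node}{k}{l}{u}$ to the ``$\limpliedby$'' defining constraint $\Cdef{\limpliedby}{v_{\node_\lfalse}}{k+1}{l_\lfalse}{u_\lfalse}$ of the else-child $\node_\lfalse$ (one addition). The tail cancels and the layer-$k$ literals occur with negative coefficients $-\cost_i$; adding the literal axiom $\blocklit_i\geq 0$ scaled by $\cost_i$ for each $i\in\litspartition_k$ (one multiplication and one addition each) clears them, leaving a PB constraint over $\olnot{v_\node}$ and $v_{\node_\lfalse}$ whose degree is $u_\lfalse+1-l$. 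A single division then yields \eqref{eq:mdd:clause_else}, for a total of $|\litspartition_k|+1$ additions, $|\litspartition_k|$ literal axioms, $|\litspartition_k|$ multiplications and $1$ division. For \eqref{eq:mdd:clause_var} I proceed identically, now combining $\Cdef{\limplies}{v_\node}{k}{l}{u}$ with the ``$\limpliedby$'' defining constraint of $\node_c=\child{\node}{\blocklit_z}$, say with degree interval $[l_z,u_z]$, except that I keep $\blocklit_z$ (it appears in the target clause) and clear only the $|\litspartition_k|-1$ literals $\blocklit_i$ with $i\neq z$; this leaves a constraint over $\olnot{v_\node}$, $\olnot{\blocklit_z}$ and $v_{\node_c}$ of degree $u_z+1-(l-\cost_z)$, and a final division produces \eqref{eq:mdd:clause_var}, for $|\litspartition_k|$ additions, $|\litspartition_k|-1$ literal axioms, $|\litspartition_k|-1$ multiplications and $1$ division. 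Summing the two contributions gives the claimed $2|\litspartition_k|+1$ additions, $2|\litspartition_k|-1$ literal axioms, $2|\litspartition_k|-1$ multiplications and $2$ divisions.

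The part I expect to be the main obstacle is ensuring that these combined constraints actually have strictly positive degree, so that the final division rounds up to a genuine clause rather than collapsing the right-hand side to $0$. This is precisely where the degree-interval recurrence of \cref{prop:mddcreation} is used: from $u=\min(\{u_i+\cost_i\mid i\in\litspartition_k\}\cup\{u_\lfalse\})$ together with $l\leq u$ we obtain both $l\leq u\leq u_\lfalse$ and $l\leq u\leq u_z+\cost_z$, so $u_\lfalse+1-l\geq 1$ and $u_z+1-(l-\cost_z)\geq 1$. The remaining choice --- the divisor --- is routine: any value at least as large as the largest surviving coefficient works, \eg $\sum_{j\geq k}\sum_{i\in\litspartition_j}\cost_i-l$ plus $u_\lfalse+1$ for \eqref{eq:mdd:clause_else} (and the analogue with $u_z+1$ and $\cost_z$ for \eqref{eq:mdd:clause_var}), so that every remaining literal gets coefficient $1$ while the degree, strictly between $0$ and the divisor, rounds up to exactly $1$. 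Internal nodes represent non-trivial constraints, which keeps the coefficient of $\olnot{v_\node}$ positive; and the leaf cases $v_{\node_\lfalse}, v_{\node_c}\in\{\ltrue,\lfalse\}$ require no extra work, since under the convention that a variable is pretended to exist for every node the corresponding coefficient simply vanishes under the division, recovering the clauses the encoding actually emits.
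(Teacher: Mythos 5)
Your proposal is correct and follows essentially the same derivation as the paper: for each clause, one addition of $\Cdef{\limplies}{v_\node}{k}{l}{u}$ with the $\limpliedby$-defining constraint of the relevant child to cancel the tail, then literal axioms scaled by $\cost_i$ to clear the layer-$k$ literals (all of them for the else-clause, all but $\blocklit_z$ for the edge-clause), and one division justified by $l \leq u \leq u_\lfalse$ respectively $l \leq u \leq u_z + \cost_z$ from the degree-interval recurrence. The step counts also match the paper's accounting exactly.
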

\begin{proof}
	Let us start by deriving \eqref{eq:mdd:clause_else}. Addition of $\Cdef{\limplies}{v_\node}{k}{l}{u}$ and $\Cdef{\limpliedby}{v_{\node_\lfalse}}{k+1}{l_\lfalse}{u_\lfalse}$ gives
	\begin{align}
	((\sum_{j \geq k} &\sum_{i \in \litspartition_{k}}) - l) \olnot{v_\node} + (u_\lfalse + 1) v_{\node_\lfalse} \\ 
	& - (\sum_{j \geq k} \sum_{i \in \litspartition_j} \cost_i \blocklit_i) + (\sum_{j \geq k+1} \sum_{i \in \litspartition_j} \cost_i \blocklit_i) 
	\geq -l + u_\lfalse + 1
	\end{align}
	which is 
	\begin{align}
	((\sum_{j \geq k} \sum_{i \in \litspartition_{k}}) - l) \olnot{v_\node} + (u_\lfalse + 1) v_{\node_\lfalse} - \sum_{i \in \litspartition_k} \cost_i \blocklit_i \geq -l + u_\lfalse + 1
	\end{align}
	Adding literal axioms $\blocklit_i \geq 0$, multiplied by $\cost_i$ for every $i \in \litspartition_k$ gives
	\begin{align}
	((\sum_{j \geq k} \sum_{i \in \litspartition_{k}}) - l) \olnot{v_\node} + (u_\lfalse + 1) v_{\node_\lfalse}  \geq -l + u_\lfalse + 1
	\end{align}
	From the construction of MDDs, as explained by Proposition~\ref{prop:mddcreation}, we know that $u_\lfalse \geq u \geq l$, and therefore, by dividing by a large enough number (and rounding up), we indeed get clause \eqref{eq:mdd:clause_else}.
	
	Now, to derive \eqref{eq:mdd:clause_var}, adding up the constraints $\Cdef{\limplies}{v_\node}{k}{l}{u}$ and $\Cdef{\limpliedby}{v_{\node_c}}{k+1}{l}{u}$ gives
	\begin{align}
	((\sum_{j \geq k} \sum_{i \in \litspartition_{k}}) - l) \olnot{v_\node} + (u_c + 1) v_{\node_\lfalse} - \sum_{i \in \litspartition_k} \cost_i \blocklit_i \geq -l + u_c + 1
	\end{align}
	Addition of literal axioms $\blocklit_i \geq 0$, multiplied by $\cost_i$ for every $i \in \litspartition_j \setminus \{z\}$ gives
	 \begin{align}
	 ((\sum_{j \geq k} \sum_{i \in \litspartition_{k}}) - l) \olnot{v_\node} + (u_c + 1) v_{\node_\lfalse} - \cost_z \blocklit_z \geq -l + u_\lfalse + 1
	 \end{align}
	 Because $\blocklit_z$ is equal to $1 - \olnot{\blocklit_z}$, we can rewrite this now to
	  \begin{align}
	 ((\sum_{j \geq k} \sum_{i \in \litspartition_{k}}) - l) \olnot{v_\node} + (u_c + 1) v_{\node_\lfalse} + \cost_z \olnot{\blocklit_z} \geq -l + u_\lfalse + \cost_z + 1 
	 \end{align}
	 From Proposition~\ref{prop:mddcreation} we know that $u_c + \cost_z \geq u \geq l$, and therefore we can divide by a large enough number to get indeed constraint \eqref{eq:mdd:clause_var}.
\end{proof}

\begin{proposition}
	Let $\node_\top = \bddOf{1}{l}{\bestval-1}$ be the root node of the BDD. The clause $v_{\node_\top} \geq 1$ can be derived by a cutting planes derivation of one addition and one division from the defining constraints for $\node_\top$ and the solution improving constraint. 
\end{proposition}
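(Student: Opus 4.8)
The plan is to peel off the definitions and observe that only the $\limpliedby$-direction of the defining constraints for $\node_\top$ is needed, and that this constraint cancels against the solution-improving constraint in a single addition, after which one division rounds the degree up to $1$.

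First I would recall that the defining constraints for $\node_\top = \bddOf{1}{l}{\bestval-1}$ are the pseudo-Boolean encodings of $v_{\node_\top} \limplies \sum_{i} \cost_i \blocklit_i \leq l$ and $v_{\node_\top} \limpliedby \sum_{i} \cost_i \blocklit_i \leq \bestval-1$, and that only the latter, $\Cdef{\limpliedby}{v_{\node_\top}}{1}{l}{\bestval-1}$, plays a role. Written in normalized form, that is, instantiating $A = \bestval-1$ in the shape of $\Creif{\limpliedby}{v}$, this constraint reads $\bestval\, v_{\node_\top} + \sum_{i} \cost_i \blocklit_i \geq \bestval$.

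Next I would put the solution-improving constraint $\obj \leq \bestval-1$ into normalized form as $\sum_{i} \cost_i \olnot{\blocklit_i} \geq \bigl(\sum_{i} \cost_i\bigr) - \bestval + 1$ and add it to $\Cdef{\limpliedby}{v_{\node_\top}}{1}{l}{\bestval-1}$, both with coefficient $1$. Since $\blocklit_i$ and $\olnot{\blocklit_i}$ sum to $1$, the objective terms cancel on the left-hand side, while the degrees combine on the right-hand side to $\bestval + \bigl(\sum_{i}\cost_i\bigr) - \bestval + 1 - \sum_{i}\cost_i = 1$, leaving exactly $\bestval\, v_{\node_\top} \geq 1$. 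A single division by $\bestval$ then rounds the degree up to $\lceil 1/\bestval \rceil = 1$, producing the unit clause $v_{\node_\top} \geq 1$. This is one addition followed by one division, as claimed.

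I do not expect a genuine obstacle here: the derivation is the same cancellation pattern already used in the proof of Proposition~\ref{prop:clauses-mdd}, now carried out between a single defining constraint and the solution-improving constraint. The only point deserving a sentence of care is that the division step requires $\bestval \geq 1$ so that $\lceil 1/\bestval \rceil$ equals $1$; this always holds, since a BDD for $\obj \leq \bestval-1$ with a genuine internal root is only constructed when $\bestval - 1 \geq 0$ (otherwise the solution-improving constraint is already contradictory and optimality of the incumbent follows immediately by RUP).
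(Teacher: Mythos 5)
Your proposal is correct and follows exactly the paper's own derivation: add the $\limpliedby$-direction of the defining constraints for $\node_\top$ (in normalized form $\bestval\, v_{\node_\top} + \sum_i \cost_i \blocklit_i \geq \bestval$) to the normalized solution-improving constraint so that the objective terms cancel, leaving $\bestval\, v_{\node_\top} \geq 1$, then divide by $\bestval$. The extra remark about needing $\bestval \geq 1$ for the final rounding is a reasonable point of care that the paper leaves implicit.
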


\begin{proof}
	Adddition of the constraint $\Cdef{\limpliedby}{v_{\node_\top}}{1}{l}{\bestval-1}$ and the solution improving constraint $\objective \leq \bestval - 1$, followed by a deletion by $\bestval$ results in the desired constraint.
\end{proof}


\paragraph{Deriving the defining constraints}
We will now turn our attention to proving the defining constraints of the variables representing the nodes in the MDD encoding.

\begin{proposition}\label{prop:mainproofmdd}
	Let $\node = \mddOf{k}{l}{u}$ with children $\{\child{\node}{\blocklit_i} \mid i \in \litspartition_{k}\} \cup \{\child{\node}{\lfalse}\}$. 
	
	If the defining constraints for all children are given, then it is possible to derive the defining constraints for $\node$ in the VeriPB proof system. 
\end{proposition}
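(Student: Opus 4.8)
The plan is to follow the template of the BDD case (Proposition~\ref{prop:mainproofbdd}), replacing the binary case split by a split over the $|\litspartition_k|+1$ children of $\node$ and closing it using the at-most-one constraint $\amo(\litspartition_k)$, whose derivation is available (certification of implicit at-most-one detection has been described in prior work).

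First I would apply the redundance rule twice to introduce a fresh variable $v_\node$ reifying $\sum_{j\geq k}\sum_{i\in\litspartition_j}\cost_i\blocklit_i\leq l$; one of the two constraints obtained is literally $\Cdef{\limplies}{v_\node}{k}{l}{u}$ (the $\limplies$-defining constraint only mentions $l$), and the other is the $\limpliedby$-reification of $v_\node$ at degree $l$, which I will need in the case analysis below. I would then introduce a second, auxiliary fresh variable $v_\node'$ reifying $\sum_{j\geq k}\sum_{i\in\litspartition_j}\cost_i\blocklit_i\leq u$, obtaining both directions of that reification. The only remaining defining constraint, $\Cdef{\limpliedby}{v_\node}{k}{l}{u}$, will follow from $\olnot{v_\node'}+v_\node\geq1$ (``$v_\node'$ implies $v_\node$'') by a single multiplication of that clause by $u+1$ followed by one addition with $\Creif{\limpliedby}{v_\node'}$, turning it into $(u+1)v_\node+\sum_{j\geq k}\sum_{i\in\litspartition_j}\cost_i\blocklit_i\geq u+1$.

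The heart of the argument is deriving $\olnot{v_\node'}+v_\node\geq1$. I would establish, each by contradiction (redundance with the empty witness), the $|\litspartition_k|+1$ clauses $\olnot{\blocklit_z}+\olnot{v_\node'}+v_\node\geq1$ for every $z\in\litspartition_k$, together with the ``else'' clause $\sum_{i\in\litspartition_k}\blocklit_i+\olnot{v_\node'}+v_\node\geq1$. For the clause indexed by $z$: assume $\blocklit_z$, $v_\node'$ and $\olnot{v_\node}$. From $v_\node'$ and its reification we get $\sum_{j\geq k}\sum_{i\in\litspartition_j}\cost_i\blocklit_i\leq u$; combining with $\amo(\litspartition_k)$ and $\blocklit_z$ collapses the layer-$k$ part of this sum to $\cost_z$, so $\sum_{j\geq k+1}\sum_{i\in\litspartition_j}\cost_i\blocklit_i\leq u-\cost_z\leq u_z$ by Proposition~\ref{prop:mddcreation}, and then $\Cdef{\limpliedby}{v_{\child{\node}{\blocklit_z}}}{k+1}{l_z}{u_z}$ forces $v_{\child{\node}{\blocklit_z}}$ true. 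Symmetrically $\olnot{v_\node}$ gives $\sum_{j\geq k}\sum_{i\in\litspartition_j}\cost_i\blocklit_i\geq l+1$, hence $\sum_{j\geq k+1}\sum_{i\in\litspartition_j}\cost_i\blocklit_i\geq l+1-\cost_z>l_z$ (again Proposition~\ref{prop:mddcreation}), and $\Cdef{\limplies}{v_{\child{\node}{\blocklit_z}}}{k+1}{l_z}{u_z}$ forces the same variable false --- a contradiction. The else clause is obtained in the same way from the defining constraints of $\child{\node}{\lfalse}$ and the bounds $u_\lfalse\geq u$, $l_\lfalse\leq l$ (here the layer-$k$ part of the sum is $0$ outright, so $\amo(\litspartition_k)$ is not even needed). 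Summing the $z$-clauses, adding the else clause while using $\blocklit_z+\olnot{\blocklit_z}=1$, and dividing by $|\litspartition_k|+1$ then yields $\olnot{v_\node'}+v_\node\geq1$; the case $|\litspartition_k|=1$ recovers exactly Proposition~\ref{prop:mainproofbdd}.

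The hard part will be the degree-interval bookkeeping: one must verify that the inequalities handed down by Proposition~\ref{prop:mddcreation} are exactly tight enough for each ``assume the negation'' branch to yield a genuine cutting-planes contradiction --- in particular that the strict inequalities $\sum_{j\geq k+1}\cdots>l_z$ and $\sum_{j\geq k+1}\cdots>l_\lfalse$ really follow --- and that the interplay with $\amo(\litspartition_k)$ isolates the coefficient $\cost_z$ of the branched-on literal correctly. Turning each of the $|\litspartition_k|+1$ refutations into an explicit sequence of literal axioms, linear combinations and divisions is then routine but bulky, the only real novelty over the BDD proof being that the number of intermediate constraints grows with $|\litspartition_k|$.
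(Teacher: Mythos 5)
Your proposal is correct and follows essentially the same route as the paper's proof: introduce $v_\node$ and $v_\node'$ reifying the sum at degrees $l$ and $u$, derive the $|\litspartition_k|+1$ clauses $\olnot{\blocklit_z}+\olnot{v_\node'}+v_\node\geq1$ and $\sum_{i\in\litspartition_k}\blocklit_i+\olnot{v_\node'}+v_\node\geq1$ each by contradiction using the children's defining constraints, the AMO constraint, and the interval bounds of Proposition~\ref{prop:mddcreation}, then sum, divide by $|\litspartition_k|+1$, and combine with $\Creif{\limpliedby}{v_\node'}$. The only details you leave implicit that the paper spells out are the degenerate cases where a child is the true or false leaf.
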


\begin{proof}	
	The first step is to introduce two reification variables $v_\node$ and $v_\node'$ as 
	\begin{align}
		v_\node &\lequiv  \sum_{j \geq k} \sum_{i \in \litspartition_j} \cost_i \blocklit_i \leq l  &&&
		v_\node' &\lequiv \sum_{j \geq k} \sum_{i \in \litspartition_j} \cost_i \blocklit_i \leq u, \label{defvprime} 
	\end{align}
	The next step is showing that these two variables are actually the same; the difficult direction is showing that $v_\node'$ implies $_\node$. 
	We do this by deriving 
	\begin{align}
	&\olnot{\blocklit_m} + \olnot{v_\node'} + v_\node \geq 1 \text{, for every } m \in \litspartition_{k}  \text{, and} \label{caseVarTrue} \\
	&\sum_{i \in \litspartition_{k}} \blocklit_i + \olnot{v_\node'} + v_\node \geq 1 \label{caseVarFalse}.
	\end{align}
	Indeed, \eqref{caseVarTrue} expresses that if $\blocklit_m$ is true, then $v_\node'$ implies $v_\node$; \eqref{caseVarFalse} expresses that if $\blocklit_i$ is false for every $i \in \litspartition_{k}$ , then $v_\node'$ implies $v_\node$. 
	Once these constraints are derived, adding them all up and dividing by $|\litspartition_k|+1$ results in $\olnot{v_\node'} + v_\node \geq 1$, which can be multiplied by $u+1$ and added to $\Creif{\limpliedby}{v_\node'}$ to obtain $v_\node \limpliedby \sum_{i\geq k}^n \cost_i \blocklit_i \leq u$.

	\paragraph{Deriving \eqref{caseVarTrue}} 
	We will now proceed with deriving \eqref{caseVarTrue} for variable $\blocklit_m$ with $m \in \litspartition_{k}$. 
	For that reason, consider the node $\node_c = \child{\node}{\blocklit_m} = \mddOf{k+1}{l_c}{u_c}$, for which the defining constraints $\Cdef{\limplies}{v_{\node_c}}{k}{l}{u}$ and $\Cdef{\limpliedby}{v_{\node_c}}{k}{l}{u}$
	are already derived. 
	Assume for now that $\node_c$ is not the true or false leaf; the case for true or false leaf will be handled later.
	The proof is by contradiction, so assume $\neg (\olnot \blocklit_m + \olnot v_\node' + v_\node \geq 1)$ holds, from which we can easily derive the constraints\footnote{The negation of $ \olnot \blocklit_m + \olnot v_\node' + v_\node \geq 1$ is $ \blocklit_m + v_\node' + \olnot v_\node \geq 3$, adding literal axioms to this constraint results in the constraints \eqref{eq:ass:claim1}.}
	\begin{align}
	\blocklit_m \geq 1, &\quad\text{ and }  &v_\node' \geq 1, &\quad\text{ and } &\olnot{v_\node} \geq 1. \label{eq:ass:claim1}
	\end{align} 
	
	On the one hand, by combining $v_\node' \geq 1$  with $\Creif{\limplies}{v_\node'}$,~\footnote{Multiplying  $v_\node' \geq 1$ by $(\sum_{j \geq k} \sum_{i \in \litspartition_j} \cost_i) - l$ added to $\Creif{\limplies}{v_\node'}.$} we can derive
	\begin{align}
	\sum_{j \geq k} \sum_{i \in \litspartition_j} \cost_i \blocklit_i \leq u
	\end{align}
	By adding $\blocklit_m \geq 1$, multiplied by $\cost_m$, we obtain 
	\begin{align}
	\sum_{i \in \litspartition_{k}, i \neq m} \cost_i \blocklit_i + \sum_{j \geq k+1} \sum_{i \in \litspartition_j} \cost_i \blocklit_i \leq u - \cost_m  \label{eq:mdd:casesplitting_var:sufficientfalse}
	\end{align}
	Adding literal axioms ${\blocklit_i} \geq 0$, multiplied by $\cost_i$ for every $i \in \litspartition_{k}$ for which $i \neq m$, we get 
	\begin{align}
	\sum_{j \geq k+1} \sum_{i \in \litspartition_j} \cost_i \blocklit_i \leq u - \cost_m
	\end{align}
	Addition with $\Cdef{\limpliedby}{v_{\node_c}}{k+1}{l_c}{u_c}$ gives
	\begin{align}
	(u_c+1)v_{\node_c} \geq u_c + v_m - u + 1
	\end{align}
	Using the fact that, according to Proposition~\ref{prop:mddcreation}, $u \leq u_c + \cost_m$,  division by a large large enough number (and rounding up) gives 
	\begin{align}
		v_{\node_c} \geq 1 \label{caseVarTrue-vc-true}
	\end{align}

	On the other hand, by combining $\olnot v_\node \geq 1$ with $\Creif{\limpliedby}{v_\node}$~\footnote{Multiplying $\olnot v_\node \geq 1$ by $l+1$ and adding the result with $\Creif{\limpliedby}{v_\node}$.} gives
	\begin{align}
	\sum_{j \geq k} \sum_{i \in \litspartition_j} \cost_i \blocklit_m \geq l + 1 \label{eq:mdd:casesplitting_var:sufficienttrue}
	\end{align}
	By adding $\blocklit_m \geq 1$, multiplied to $\cost_m$ to  $\amo({\node})$ and adding literal axioms $b_i \geq 0$, multiplied by $\cost_i$, for $i \neq j$, we can derive
	\begin{align}
		\blocklit_j \leq 0, \text{ for every } j \neq m \label{eq:otherlitsfalse}
	\end{align}
	By adding these constraints multiplied by $\cost_j$ to \eqref{eq:mdd:casesplitting_var:sufficienttrue} we obtain 
	\begin{align}
	\cost_m \blocklit_m + \sum_{j \geq k+1} \sum_{i \in \litspartition_j} \cost_i \blocklit_m \geq l + 1
	\end{align}
	Now, adding literal axiom $\olnot \blocklit_m \geq 0$ multiplied by $\cost_m$ (and using that $\blocklit_m + \olnot \blocklit_m = 1$) gives
	\begin{align}
	\sum_{j \geq k+1} \sum_{i \in \litspartition_j} \cost_i \blocklit_m \geq l - \cost_m + 1
	\end{align}
	Finally, adding $\Cdef{\limplies}{v_{\node_c}}{k+1}{l_c}{u_c}$ gives
	\begin{align}
		((\sum_{j \geq k+1} \sum_{i \in \litspartition_j} v_i ) - l_c) \olnot v_{\node_c} \geq l  - \cost_m + 1 - l_c
	\end{align}
	Since, according to Proposition~\ref{prop:mddcreation}, $l \geq l_c + \cost_m$, dividing by a large enough number (and rounding up) gives 
	\begin{align}
		\olnot v_{\node_c} \geq 1  \label{caseVarTrue-vc-false}
	\end{align}
	Addition of \eqref{caseVarTrue-vc-true} and \eqref{caseVarTrue-vc-false} results in a contradiction.
	
	Note that in case $\node_c$ is the false leaf, $u - \cost_m < 0$ by design, and therefore \eqref{eq:mdd:casesplitting_var:sufficientfalse} is a contradiction. Similarly, if $\node_c$ is the true leaf, deriving constraint \eqref{eq:mdd:casesplitting_var:sufficienttrue} is sufficient to derive a contradiction. 
	
	\paragraph{Deriving \eqref{caseVarFalse}} 
	Let us now derive the constraint \eqref{caseVarFalse} for the outgoing edge encoding the case where no literal in $\{\blocklit_i \mid k \leq i \leq l\}$ is true. In that case, there is a child $\node_\lfalse = \child{\node}{\lfalse} = \mddOf{k+1}{l_\lfalse}{u_\lfalse}$ for which variable $v_{\node_\lfalse}$ is defined by its defining constraints. Note that $\node_\lfalse$ cannot be the true leaf, since then $\node$ is the true leaf as well. We will show how to handle the case where $\node_\lfalse$ is the false-leaf, so for now, assume that $\node_\lfalse$ is neither the true, nor the false leaf. 
	
	In analogy to deriving \eqref{caseVarTrue}, the proof is by contradiction, so assume $\lneg (\sum_{i \in \litspartition_{k}} \blocklit_i + \olnot{v_\node'} + v_\node \geq 1)$. From this assumption, we can derive 
	\begin{align}
		\olnot{\blocklit_j} \geq 1 \quad (\forall j \in \litspartition_{k}),  &\quad\text{ and }  &v_\node' \geq 1, &\quad\text{ and } &\olnot{v_\node} \geq 1. \label{eq:ass:claim2}
	\end{align}
	
	By combining the constraints $\olnot v_\node \geq 1$ and $\Creif{\limpliedby}{v_\node'}$,\footnote{Multiplying $\olnot v_\node \geq 1$ by $u+1$ and adding it to $\Creif{\limpliedby}{v_\node'}$.} we can derive 
	\begin{align}
		\sum_{j \geq k} \sum_{i \in \litspartition_j} \cost_i \blocklit_i \geq l + 1
	\end{align}
	Addition of the constraints $\olnot{\blocklit_i} \geq 1$ (derived in \eqref{eq:ass:claim2}) multiplied by $\cost_i$ for every $i \in \litspartition_{k}$, and using the fact that $\blocklit_i + \olnot{\blocklit}_i = 1$ gives
	\begin{align}
		\sum_{j \geq k+1} \sum_{i \in \litspartition_j} \cost_i \blocklit_i \geq l + 1 \label{eq:mdd:casesplitting_false:sufficientfalse}
	\end{align}
	Now, if $\node_\lfalse$ is the false leaf, we know that $\sum_{j \geq k+1} \sum_{i \in \litspartition_j} \cost_i \blocklit_i \leq l$, which is derived by adding the constraints $\amoUB(\litspartition_j)$ for every $j \geq k+1$. Addition of this constraint with \eqref{eq:mdd:casesplitting_false:sufficientfalse} results in contradiction.
	If however $\node_\lfalse$ is not the false leaf, then we can add \eqref{eq:mdd:casesplitting_false:sufficientfalse} with $\Cdef{\limplies}{v_{\node_\lfalse}}{k+1}{l_\lfalse}{u_\lfalse}$, which results in
	\begin{align}
	((\sum_{j \geq k+1} \sum_{i \in \litspartition_{k}} \cost_i) - l_\lfalse ) \olnot{v_{\node_\lfalse}} \geq l - l_\lfalse + 1
	\end{align}
	From Proposition~\ref{prop:mddcreation} we know that $l \geq l_\lfalse$, so division by a large number gives 
	\begin{align}
	\olnot{v_{\node_\lfalse}} \geq 1 \label{eq:casevarfalse:vcfalse}
	\end{align}

	We will now show that we can also derive the constraint $v_{\node_\lfalse} \geq 1$ which can then be added to \eqref{eq:casevarfalse:vcfalse} to derive a contradiction. 
	We do this as follows.
	By combining the constraints $v_\node' \geq 1$ and $\Creif{\limplies}{v_\node'}$,~\footnote{Multiplying $v_\node' \geq 1$ by $(\sum_{j \geq k} \sum_{i \in \litspartition_j} \cost_i) - u$ and adding the result to $\Creif{\limplies}{v_\node'}$} we can derive 
	\begin{align}
	\sum_{j \geq k} \sum_{i \in \litspartition_j} \cost_i \blocklit_i \leq u
	\end{align}
	From addition of literal axioms $\blocklit_i \geq 0$ multiplied by $\cost_i$ for all $i \in \litspartition_{k}$, we can get
	\begin{align}
	\sum_{j \geq k+1} \sum_{i \in \litspartition_j} \cost_i \blocklit_i \leq u
	\end{align}
	Addition with the constraint $\Cdef{\limpliedby}{v_{\node_\lfalse}}{k+1}{l_\lfalse}{u_\lfalse}$, we get
	\begin{align}
		(u_\lfalse+1) v_{\node_\lfalse} \geq u_\lfalse + 1 - u
	\end{align}
	Given that $u_\lfalse \geq u$ (see Proposition~\ref{prop:mddcreation}), division by $u_\lfalse + 1$ results in 
	\begin{align}
	v_{\node_\lfalse} \geq 1 \label{eq:casevarfalse:vctrue}
	\end{align}
	This concludes the proof.
\end{proof}

Finally, it can happen that due to reducedness of MDDs, a node actually represents multiple constraints. 
The following proposition shows that also in that case the required defining constraints can be derived.
\begin{proposition}
	Consider a node $\node = \mddOf{k}{l}{u}$ and let $k'$ and $l'$ be numbers with $k' < k$ and $l' = l + \sum_{j=k}^{k'-1} \max(\{\cost_i \mid i \in \litspartition_j\})$ such that the Boolean functions $\sum_{j \geq k'} \sum_{i \in \litspartition_j} \cost_i \blocklit_i \leq u$ and $\sum_{j \geq k} \sum_{i \in \litspartition_j} \cost_i \blocklit_i \leq u$ are equivalent.
	
	From the defining constraints of $\mddOf{k}{l}{u}$, there is a cutting planes derivation of length XYZ that derives the defining constraints for $\mddOf{k'}{l'}{u}$ for reification variable $v_\node$ from the defining constraints for $\mddOf{k}{l}{u}$.
\end{proposition}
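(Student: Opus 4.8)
The plan is to derive the two defining constraints of the layer-$k'$ node, namely $\Cdef{\limplies}{v_\node}{k'}{l'}{u}$ and $\Cdef{\limpliedby}{v_\node}{k'}{l'}{u}$, directly from the given defining constraints $\Cdef{\limplies}{v_\node}{k}{l}{u}$ and $\Cdef{\limpliedby}{v_\node}{k}{l}{u}$, reusing the \emph{same} reification variable $v_\node$ (no new reification step is needed). Throughout write $M_j := \max(\{\cost_i \mid i \in \litspartition_j\})$, so that $l' = l + \sum_{j=k'}^{k-1} M_j$, and $S_{\geq k} := \sum_{j \geq k}\sum_{i \in \litspartition_j}\cost_i$.

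The $\limpliedby$ direction is immediate. Its pseudo-Boolean form (spelled out above) is $(u+1)v_\node + \sum_{j\geq k}\sum_{i\in\litspartition_j}\cost_i\blocklit_i \geq u+1$, and the target $\Cdef{\limpliedby}{v_\node}{k'}{l'}{u}$ is the same inequality with the double sum extended over all layers $j \geq k'$; the right-hand side $u+1$ does not involve the lower bound, so $l'$ is irrelevant here. I would therefore add, for every $i \in \litspartition_j$ with $k' \leq j < k$, the literal axiom $\blocklit_i \geq 0$ scaled by $\cost_i$, which inserts exactly the missing monomials $\sum_{j=k'}^{k-1}\sum_{i\in\litspartition_j}\cost_i\blocklit_i$ without changing the degree.

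The $\limplies$ direction carries the real content. Its pseudo-Boolean form is $\bigl(S_{\geq k} - l\bigr)\olnot{v_\node} - \sum_{j\geq k}\sum_{i\in\litspartition_j}\cost_i\blocklit_i \geq -l$, and the target $\Cdef{\limplies}{v_\node}{k'}{l'}{u}$ is the same with $k,l$ replaced by $k',l'$. The key observation is that for each intermediate layer $j$ with $k' \leq j < k$ the per-layer bound $\amoUB(\litspartition_j) : \sum_{i\in\litspartition_j}\cost_i\blocklit_i \leq M_j$ — derivable from $\amo(\litspartition_j)$ by the proposition stated earlier in this appendix — contributes exactly the amount $M_j$ that the definition of $l'$ reserves for layer $j$. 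Concretely I would (1) derive $\amoUB(\litspartition_j)$ for $j = k',\dots,k-1$; (2) add $\Cdef{\limplies}{v_\node}{k}{l}{u}$ together with these $k-k'$ constraints, each with coefficient one. Using $S_{\geq k'} = S_{\geq k} + \sum_{j=k'}^{k-1}\sum_{i\in\litspartition_j}\cost_i$ and $l'-l = \sum_{j=k'}^{k-1}M_j$, a short computation shows this yields the constraint $\bigl(S_{\geq k} - l\bigr)\olnot{v_\node} - \sum_{j\geq k'}\sum_{i\in\litspartition_j}\cost_i\blocklit_i \geq -l'$, which matches the target except that the coefficient of $\olnot{v_\node}$ is $S_{\geq k}-l$ instead of $S_{\geq k'}-l'$. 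Step (3) closes this gap by adding the literal axiom $\olnot{v_\node} \geq 0$ scaled by $(S_{\geq k'}-l') - (S_{\geq k}-l) = \sum_{j=k'}^{k-1}\bigl(\sum_{i\in\litspartition_j}\cost_i - M_j\bigr) \geq 0$, which raises the coefficient of $\olnot{v_\node}$ while leaving the right-hand side at $-l'$, producing exactly $\Cdef{\limplies}{v_\node}{k'}{l'}{u}$. When every $\litspartition_j$ is a singleton, $\amoUB(\litspartition_j)$ degenerates to the literal axiom $\blocklit_j \leq 1$ and step (3) is vacuous (since then $\sum_{i\in\litspartition_j}\cost_i = M_j$), recovering the BDD version; in general the derivation uses a number of cutting-planes steps linear in $\sum_{j=k'}^{k-1}(|\litspartition_j|+1)$, and carefully counting these (specialising to the BDD bound $6(k-k')+3$) is what would replace the placeholder length in the statement.

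The step I expect to be the main obstacle is the bookkeeping in (2)--(3): checking that summing in the $\amoUB(\litspartition_j)$ lands the right-hand side exactly on $-l'$ and that the \emph{only} residual mismatch with the target is the coefficient of $\olnot{v_\node}$, so that one scaled literal axiom finishes the job. This hinges on the identity $l'-l = \sum_{j=k'}^{k-1}M_j$ and on the $\amoUB(\litspartition_j)$ being the tight per-layer bounds that appear in the MDD-construction recurrence (Proposition~\ref{prop:mddcreation}). The hypothesis that $\sum_{j\geq k'}\cdots\leq u$ and $\sum_{j\geq k}\cdots\leq u$ define the same Boolean function, together with the choice of $l'$, is what makes the layer-$k'$ node coincide with the layer-$k$ node in the reduced MDD, and hence justifies sharing the variable $v_\node$: it guarantees that the constraints produced above are genuinely \emph{the} defining constraints of $\mddOf{k'}{l'}{u}$ rather than merely valid consequences of those of $\mddOf{k}{l}{u}$.
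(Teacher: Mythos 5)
Your proposal is correct and follows essentially the same route as the paper's proof: literal axioms $\blocklit_i \geq 0$ to pad the sum for the $\limpliedby$ direction, and the per-layer bounds $\amoUB(\litspartition_j)$ summed into $\Cdef{\limplies}{v_\node}{k}{l}{u}$ (landing the degree on $-l'$) followed by a scaled literal axiom on $\olnot{v_\node}$ for the $\limplies$ direction. In fact your bookkeeping is slightly more careful than the paper's write-up, which contains typos in exactly the places you flag as delicate (the literal axiom there should indeed be $\olnot{v_\node} \geq 0$ scaled by $\sum_{j=k'}^{k-1}\bigl(\sum_{i\in\litspartition_j}\cost_i - \max(\{\cost_i \mid i \in \litspartition_j\})\bigr)$, as you compute).
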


\begin{proof}
	Let us start by deriving $\Cdef{\limplies}{v_\node}{k'}{l'}{u}$. 
	Addition of the constraints $\amoUB(\litspartition_j) = \sum_{i \in \litspartition_j} \cost_i \blocklit_i \leq \max(\{\cost_i \mid i \in \litspartition_j\})$, for every $j$ such that $k' \leq j < k$  gives
	\begin{align}
		\sum_{j = k'}^{k-1} \sum_{i \in \litspartition_j} \cost_i \blocklit_i \leq \sum_{j = k'}^{k-1} \max(\{\cost_i \mid i \in \litspartition_j\})
	\end{align}
	Adding this with $\Cdef{\limplies}{v_\node}{k'}{l'}{u}$ gives
	\begin{align}
		((\sum_{j \geq k} \sum_{i \in \litspartition_j} \cost_i) - l) &v_\node - \sum_{j \geq k'} \sum_{i \in \litspartition_j} \cost_i \blocklit_i \\ 
		& \geq -l - \sum_{j = k'}^{k-1} \max(\{\cost_i \mid i \in \litspartition_j\})
	\end{align}
	By now adding the literal axiom $v_\node \geq 0$ multiplied by $\sum_{j = k'}^{k-1} \max(\{\cost_i \mid i \in \litspartition_j\})$ we obtain in $\Cdef{\limplies}{v_\node}{k}{l}{u}$.
	
	To derive $\Cdef{\limpliedby}{v_\node}{k'}{l'}{u}$, consider constraint $\Cdef{\limpliedby}{v_\node}{k}{l}{u}$, which is
	\begin{align}
		(u+1) v_\node + \sum_{j \geq k} \sum_{i \in \litspartition_j} \cost_i \blocklit_i \geq u+1
	\end{align}
	Adding literal axioms $\blocklit_i \geq 0$, multiplied by $\cost_i$, for every $i \in \litspartition_j$ for every $j$ such that $k' \leq j < k$, we get 
	\begin{align}
	(u+1) v_\node + \sum_{j \geq k'} \sum_{i \in \litspartition_j} \cost_i \blocklit_i \geq u+1
	\end{align}
	which is indeed equal to $\Cdef{\limpliedby}{v_\node}{k'}{l'}{u}$.
\end{proof}


\begin{thebibliography}{58}
\providecommand{\natexlab}[1]{#1}
\providecommand{\url}[1]{\texttt{#1}}
\expandafter\ifx\csname urlstyle\endcsname\relax
  \providecommand{\doi}[1]{doi: #1}\else
  \providecommand{\doi}{doi: \begingroup \urlstyle{rm}\Url}\fi

\bibitem[Ab{\'{\i}}o et~al.(2012)Ab{\'{\i}}o, Nieuwenhuis, Oliveras,
  Rodr{\'{\i}}guez{-}Carbonell, and
  Mayer{-}Eichberger]{ANORM12NewLookBDDsPseudo-BooleanConstraints}
Ignasi Ab{\'{\i}}o, Robert Nieuwenhuis, Albert Oliveras, Enric
  Rodr{\'{\i}}guez{-}Carbonell, and Valentin Mayer{-}Eichberger.
\newblock A new look at {BDD}s for pseudo-{Boolean} constraints.
\newblock \emph{J. Artif. Intell. Res.}, 45:\penalty0 443--480, 2012.

\bibitem[Abram{\'{e}} and Habet(2014)]{jsat/AbrameH14}
Andr{\'{e}} Abram{\'{e}} and Djamal Habet.
\newblock Ahmaxsat: Description and evaluation of a branch and bound max-sat
  solver.
\newblock \emph{J. Satisf. Boolean Model. Comput.}, 9\penalty0 (1):\penalty0
  89--128, 2014.

\bibitem[Berg et~al.(2023)Berg, Bogaerts, Nordstr{\"{o}}m, Oertel, and
  Vandesande]{BBNOV23CertifiedCore-GuidedMaxSATSolving}
Jeremias Berg, Bart Bogaerts, Jakob Nordstr{\"{o}}m, Andy Oertel, and Dieter
  Vandesande.
\newblock Certified core-guided {MaxSAT} solving.
\newblock In \emph{{CADE}}, volume 14132 of \emph{{LNCS}}, pages 1--22.
  Springer, 2023.

\bibitem[Berg et~al.(2024{\natexlab{a}})Berg, Bogaerts, Nordstr{\"{o}}m,
  Oertel, Paxian, and
  Vandesande]{BBNOPV24CertifyingWithoutLossGeneralityReasoningSolution-Improving}
Jeremias Berg, Bart Bogaerts, Jakob Nordstr{\"{o}}m, Andy Oertel, Tobias
  Paxian, and Dieter Vandesande.
\newblock Certifying without loss of generality reasoning in solution-improving
  maximum satisfiability.
\newblock In \emph{{CP}}, volume 307 of \emph{LIPIcs}, pages 4:1--4:28. Schloss
  Dagstuhl, 2024{\natexlab{a}}.

\bibitem[Berg et~al.(2024{\natexlab{b}})Berg, J\"arvisalo, Martins, Niskanen,
  and Paxian]{url:MSE24}
Jeremias Berg, Matti J\"arvisalo, Ruben Martins, Andreas Niskanen, and Tobias
  Paxian.
\newblock {MaxSAT} evaluation 2024.
\newblock \url{https://maxsat-evaluations.github.io/2024/}, 2024{\natexlab{b}}.

\bibitem[Bertot and
  Cast{\'{e}}ran(2004)]{BC04InteractiveTheoremProvingProgramDevelopment-}
Yves Bertot and Pierre Cast{\'{e}}ran.
\newblock \emph{Interactive Theorem Proving and Program Development -
  {Coq'Art}: The Calculus of Inductive Constructions}.
\newblock Texts in Theoretical Computer Science. An {EATCS} Series. Springer,
  2004.

\bibitem[Biere(2006)]{url:tracecheck}
Armin Biere.
\newblock Tracecheck.
\newblock \url{http://fmv.jku.at/tracecheck/}, 2006.

\bibitem[Bleukx et~al.(2026)Bleukx, Flippo, Demirović, Bogaerts, and
  Guns]{BFDBG26UsingCertifyingConstraintSolversGeneratingStep-wise}
Ignace Bleukx, Maarten Flippo, Emir Demirović, Bart Bogaerts, and Tias Guns.
\newblock Using certifying constraint solvers for generating step-wise
  explanations.
\newblock In \emph{Proceedings of The 40th Annual {AAAI} Conference on
  Artificial Intelligence}, 2026.
\newblock accepted for publication.

\bibitem[Bofill et~al.(2020)Bofill, Coll, Suy, and
  Villaret]{BCSV20MDD-basedSATencodingpseudo-Booleanconstraintsat-most-one}
Miquel Bofill, Jordi Coll, Josep Suy, and Mateu Villaret.
\newblock An mdd-based {SAT} encoding for pseudo-{Boolean} constraints with
  at-most-one relations.
\newblock \emph{Artif. Intell. Rev.}, 53\penalty0 (7):\penalty0 5157--5188,
  2020.

\bibitem[Bogaerts et~al.(2022)Bogaerts, Gocht, McCreesh, and
  Nordstr{\"{o}}m]{BGMN22CertifiedSymmetryDominanceBreakingCombinatorialOptimisation}
Bart Bogaerts, Stephan Gocht, Ciaran McCreesh, and Jakob Nordstr{\"{o}}m.
\newblock Certified symmetry and dominance breaking for combinatorial
  optimisation.
\newblock In \emph{{AAAI}}, pages 3698--3707. {AAAI} Press, 2022.

\bibitem[Bogaerts et~al.(2023)Bogaerts, Gocht, McCreesh, and
  Nordstr{\"{o}}m]{BGMN23CertifiedDominanceSymmetryBreakingCombinatorialOptimisation}
Bart Bogaerts, Stephan Gocht, Ciaran McCreesh, and Jakob Nordstr{\"{o}}m.
\newblock Certified dominance and symmetry breaking for combinatorial
  optimisation.
\newblock \emph{J. Artif. Intell. Res.}, 77:\penalty0 1539--1589, 2023.

\bibitem[Bonet et~al.(2007)Bonet, Levy, and
  Many{\`{a}}]{BLM07ResolutionMax-SAT}
Maria~Luisa Bonet, Jordi Levy, and Felip Many{\`{a}}.
\newblock Resolution for {Max-SAT}.
\newblock \emph{Artif. Intell.}, 171\penalty0 (8-9):\penalty0 606--618, 2007.

\bibitem[Brummayer and Biere(2009)]{BB09FuzzingDelta-DebuggingSMTSolvers}
Robert Brummayer and Armin Biere.
\newblock Fuzzing and delta-debugging {SMT} solvers.
\newblock In \emph{Proceedings of the {SMT}'09}, pages 1--5, 2009.
\newblock ISBN 9781605584843.

\bibitem[Brummayer et~al.(2010)Brummayer, Lonsing, and
  Biere]{BLB10AutomatedTestingDebuggingSATQBFSolvers}
Robert Brummayer, Florian Lonsing, and Armin Biere.
\newblock Automated testing and debugging of {SAT} and {QBF} solvers.
\newblock In \emph{{SAT}}, volume 6175 of \emph{{LNCS}}, pages 44--57.
  Springer, 2010.

\bibitem[Coll et~al.(2025{\natexlab{a}})Coll, Li, Li, Habet, and
  Many{\`{a}}]{CLLHM25SolvingweightedMaximumSatisfiabilityBranchBound}
Jordi Coll, Chu~Min Li, Shuolin Li, Djamal Habet, and Felip Many{\`{a}}.
\newblock Solving weighted maximum satisfiability with branch and bound and
  clause learning.
\newblock \emph{Comput. Oper. Res.}, 183:\penalty0 107195, 2025{\natexlab{a}}.

\bibitem[Coll et~al.(2025{\natexlab{b}})Coll, Li, Li, Habet, and
  Manyà]{COLL2025107195}
Jordi Coll, Chu-Min Li, Shuolin Li, Djamal Habet, and Felip Manyà.
\newblock Solving weighted maximum satisfiability with branch and bound and
  clause learning.
\newblock \emph{Computers \& Operations Research}, page 107195,
  2025{\natexlab{b}}.
\newblock ISSN 0305-0548.
\newblock \doi{https://doi.org/10.1016/j.cor.2025.107195}.
\newblock URL
  \url{https://www.sciencedirect.com/science/article/pii/S0305054825002230}.

\bibitem[Cook et~al.(1987)Cook, Coullard, and
  Tur{\'{a}}n]{CCT87complexitycutting-planeproofs}
William~J. Cook, Collette~R. Coullard, and Gy{\"{o}}rgy Tur{\'{a}}n.
\newblock On the complexity of cutting-plane proofs.
\newblock \emph{Discret. Appl. Math.}, 18\penalty0 (1):\penalty0 25--38, 1987.

\bibitem[Cook et~al.(2013)Cook, Koch, Steffy, and
  Wolter]{CKSW13hybridbranch-and-boundapproachexactrationalmixed-integer}
William~J. Cook, Thorsten Koch, Daniel~E. Steffy, and Kati Wolter.
\newblock A hybrid branch-and-bound approach for exact rational mixed-integer
  programming.
\newblock \emph{Math. Program. Comput.}, 5\penalty0 (3):\penalty0 305--344,
  2013.

\bibitem[Cruz{-}Filipe et~al.(2017)Cruz{-}Filipe, Heule, Hunt, Kaufmann, and
  Schneider{-}Kamp]{CHHKS17EfficientCertifiedRATVerification}
Lu{\'{\i}}s Cruz{-}Filipe, Marijn J.~H. Heule, Warren~A. Hunt, Jr., Matt
  Kaufmann, and Peter Schneider{-}Kamp.
\newblock Efficient certified {RAT} verification.
\newblock In \emph{{CADE}}, volume 10395 of \emph{{LNCS}}, pages 220--236.
  Springer, 2017.

\bibitem[{de Moura} and
  Ullrich(2021)]{dU21Lean4TheoremProverProgrammingLanguage}
Leonardo {de Moura} and Sebastian Ullrich.
\newblock The {Lean} 4 theorem prover and programming language.
\newblock In \emph{{CADE}}, volume 12699 of \emph{{LNCS}}, pages 625--635.
  Springer, 2021.

\bibitem[Ferreira and
  Silva(2004)]{FS04AutomaticVerificationSafetyRulesSubwayControl}
Nelson~Guimar{\~{a}}es Ferreira and Paulo S{\'{e}}rgio~Muniz Silva.
\newblock Automatic verification of safety rules for a subway control software.
\newblock In \emph{{SBMF}}, volume 130 of \emph{Electronic Notes in Theoretical
  Computer Science}, pages 323--343. Elsevier, 2004.

\bibitem[Fleury(2020)]{phd/Fleury20}
Mathias Fleury.
\newblock \emph{Formalization of logical calculi in {Isabelle}/HOL}.
\newblock PhD thesis, Saarland University, Saarbr{\"{u}}cken, Germany, 2020.

\bibitem[Gange et~al.(2011)Gange, Stuckey, and
  Szymanek]{GSS11MDDpropagatorsexplanation}
Graeme Gange, Peter~J. Stuckey, and Radoslaw Szymanek.
\newblock {MDD} propagators with explanation.
\newblock \emph{Constraints An Int. J.}, 16\penalty0 (4):\penalty0 407--429,
  2011.

\bibitem[Gillard et~al.(2019)Gillard, Schaus, and
  Deville]{GSD19SolverCheckDeclarativeTestingConstraints}
Xavier Gillard, Pierre Schaus, and Yves Deville.
\newblock Solvercheck: Declarative testing of constraints.
\newblock In \emph{{CP}}, volume 11802 of \emph{{LNCS}}, pages 565--582.
  Springer, 2019.

\bibitem[Gocht and
  Nordstr{\"{o}}m(2021)]{GN21CertifyingParityReasoningEfficientlyUsingPseudo-Boolean}
Stephan Gocht and Jakob Nordstr{\"{o}}m.
\newblock Certifying parity reasoning efficiently using pseudo-{Boolean}
  proofs.
\newblock In \emph{{AAAI}}, pages 3768--3777. {AAAI} Press, 2021.

\bibitem[Gocht et~al.(2022)Gocht, Martins, Nordstr{\"{o}}m, and
  Oertel]{GMNO22CertifiedCNFTranslationsPseudo-BooleanSolving}
Stephan Gocht, Ruben Martins, Jakob Nordstr{\"{o}}m, and Andy Oertel.
\newblock Certified {CNF} translations for pseudo-{Boolean} solving.
\newblock In \emph{{SAT}}, volume 236 of \emph{LIPIcs}, pages 16:1--16:25.
  Schloss Dagstuhl, 2022.

\bibitem[Goldberg and
  Novikov(2003)]{GN03VerificationProofsUnsatisfiabilityCNFFormulas}
Evguenii~I. Goldberg and Yakov Novikov.
\newblock Verification of proofs of unsatisfiability for {CNF} formulas.
\newblock In \emph{{DATE}}, pages 10886--10891. {IEEE} Computer Society, 2003.

\bibitem[Heras et~al.(2008)Heras, Larrosa, and Oliveras]{jair/HerasLO08}
Federico Heras, Javier Larrosa, and Albert Oliveras.
\newblock Minimaxsat: An efficient weighted max-sat solver.
\newblock \emph{J. Artif. Intell. Res.}, 31:\penalty0 1--32, 2008.

\bibitem[Heule et~al.(2013)Heule, Hunt, and
  Wetzler]{HHW13Trimmingwhilecheckingclausalproofs}
Marijn Heule, Warren~A. Hunt, Jr., and Nathan Wetzler.
\newblock Trimming while checking clausal proofs.
\newblock In \emph{{FMCAD}}, pages 181--188. {IEEE}, 2013.

\bibitem[Heule et~al.(2022)Heule, J\"arvisalo, Suda, Iser, and
  Balyo]{url:satcomp2022}
Marijn Heule, Matti J\"arvisalo, Martin Suda, Markus Iser, and Tom\'{a}\v{s}
  Balyo.
\newblock The 2022 international {SAT} competition.
\newblock \url{https://satcompetition.github.io/2022/}, 2022.

\bibitem[Ihalainen et~al.(2026)Ihalainen, Vandesande, Schidler, Berg, Bogaerts,
  and
  Järvisalo]{IVSBBJ26EfficientReliableHitting-SetComputationsImplicitHitting}
Hannes Ihalainen, Dieter Vandesande, André Schidler, Jeremias Berg, Bart
  Bogaerts, and Matti Järvisalo.
\newblock Efficient and reliable hitting-set computations for the implicit
  hitting set approach.
\newblock In \emph{Proceedings of The 40th Annual {AAAI} Conference on
  Artificial Intelligence}, 2026.
\newblock accepted for publication.

\bibitem[Jabs et~al.(2025)Jabs, Berg, Bogaerts, and
  J{\"{a}}rvisalo]{JBBJ25CertifyingParetoOptimalityMulti-ObjectiveMaximumSatisfiability}
Christoph Jabs, Jeremias Berg, Bart Bogaerts, and Matti J{\"{a}}rvisalo.
\newblock Certifying pareto optimality in multi-objective maximum
  satisfiability.
\newblock In \emph{{TACAS} {(2)}}, volume 15697 of \emph{{LNCS}}, pages
  108--129. Springer, 2025.

\bibitem[J{\"{a}}rvisalo et~al.(2012)J{\"{a}}rvisalo, Heule, and
  Biere]{JHB12InprocessingRules}
Matti J{\"{a}}rvisalo, Marijn Heule, and Armin Biere.
\newblock Inprocessing rules.
\newblock In \emph{{IJCAR}}, volume 7364 of \emph{{LNCS}}, pages 355--370.
  Springer, 2012.

\bibitem[J\"arvisalo et~al.(2023)J\"arvisalo, Berg, Martins, and
  Niskanen]{url:MSE23}
Matti J\"arvisalo, Jeremias Berg, Ruben Martins, and Andreas Niskanen.
\newblock {MaxSAT} evaluation 2023.
\newblock \url{https://maxsat-evaluations.github.io/2023/}, 2023.

\bibitem[K{\"{u}}gel(2010)]{sat/Kugel10}
Adrian K{\"{u}}gel.
\newblock Improved exact solver for the weighted {MAX-SAT} problem.
\newblock In \emph{POS@SAT}, volume~8 of \emph{EPiC Series in Computing}, pages
  15--27. EasyChair, 2010.

\bibitem[Larrosa et~al.(2011)Larrosa, Nieuwenhuis, Oliveras, and
  Rodr{\'{\i}}guez{-}Carbonell]{LNOR11FrameworkCertifiedBooleanBranch-and-BoundOptimization}
Javier Larrosa, Robert Nieuwenhuis, Albert Oliveras, and Enric
  Rodr{\'{\i}}guez{-}Carbonell.
\newblock A framework for certified {Boolean} branch-and-bound optimization.
\newblock \emph{J. Autom. Reason.}, 46\penalty0 (1):\penalty0 81--102, 2011.

\bibitem[Leivo et~al.(2020)Leivo, Berg, and J{\"{a}}rvisalo]{ecai/LeivoBJ20}
Marcus Leivo, Jeremias Berg, and Matti J{\"{a}}rvisalo.
\newblock Preprocessing in incomplete maxsat solving.
\newblock In \emph{{ECAI}}, volume 325 of \emph{Frontiers in Artificial
  Intelligence and Applications}, pages 347--354. {IOS} Press, 2020.

\bibitem[Li and Many{\`{a}}(2021)]{LM21MaxSATHardSoftConstraints}
Chu~Min Li and Felip Many{\`{a}}.
\newblock {MaxSAT}, hard and soft constraints.
\newblock In \emph{Handbook of Satisfiability}, volume 336 of \emph{{FAIA}},
  pages 903--927. {IOS} Press, 2021.

\bibitem[Li et~al.(2007)Li, Many{\`{a}}, and
  Planes]{LMP07NewInferenceRulesMax-SAT}
Chu~Min Li, Felip Many{\`{a}}, and Jordi Planes.
\newblock New inference rules for {Max-SAT}.
\newblock \emph{J. Artif. Intell. Res.}, 30:\penalty0 321--359, 2007.

\bibitem[Li et~al.(2021)Li, Xu, Coll, Many{\`{a}}, Habet, and
  He]{LXCMHH21CombiningClauseLearningBranchBoundMaxSAT}
Chu{-}Min Li, Zhenxing Xu, Jordi Coll, Felip Many{\`{a}}, Djamal Habet, and Kun
  He.
\newblock Combining clause learning and branch and bound for {MaxSAT}.
\newblock In \emph{{CP}}, volume 210 of \emph{LIPIcs}, pages 38:1--38:18.
  Schloss Dagstuhl, 2021.

\bibitem[Li et~al.(2022)Li, Xu, Coll, Many{\`{a}}, Habet, and
  He]{LXCMHH22Boostingbranch-and-boundMaxSATsolversclauselearning}
Chu{-}Min Li, Zhenxing Xu, Jordi Coll, Felip Many{\`{a}}, Djamal Habet, and Kun
  He.
\newblock Boosting branch-and-bound {MaxSAT} solvers with clause learning.
\newblock \emph{{AI} Commun.}, 35\penalty0 (2):\penalty0 131--151, 2022.

\bibitem[Li et~al.(2025)Li, Li, Coll, Habet, and Many{\`{a}}]{aaai/Li0CHM25}
Shuolin Li, Chu{-}Min Li, Jordi Coll, Djamal Habet, and Felip Many{\`{a}}.
\newblock Improving the lower bound in branch-and-bound algorithms for maxsat.
\newblock In \emph{{AAAI}}, pages 11272--11281. {AAAI} Press, 2025.

\bibitem[Manlove and
  O'Malley(2014)]{MO14PairedAltruisticKidneyDonationUKAlgorithms}
David~F. Manlove and Gregg O'Malley.
\newblock Paired and altruistic kidney donation in the {{UK}:} algorithms and
  experimentation.
\newblock \emph{{ACM} J. Exp. Algorithmics}, 19\penalty0 (1), 2014.

\bibitem[McConnell et~al.(2011)McConnell, Mehlhorn, N{\"{a}}her, and
  Schweitzer]{MMNS11Certifyingalgorithms}
Ross~M. McConnell, Kurt Mehlhorn, Stefan N{\"{a}}her, and Pascal Schweitzer.
\newblock Certifying algorithms.
\newblock \emph{Comput. Sci. Rev.}, 5\penalty0 (2):\penalty0 119--161, 2011.

\bibitem[Morgado and Marques{-}Silva(2011)]{MM11ValidatingBooleanOptimizers}
Ant{\'{o}}nio Morgado and Jo{\~{a}}o Marques{-}Silva.
\newblock On validating {Boolean} optimizers.
\newblock In \emph{{ICTAI}}, pages 924--926. {IEEE} Computer Society, 2011.

\bibitem[Nipkow et~al.(2002)Nipkow, Paulson, and
  Wenzel]{NPW02IsabelleHOL-ProofAssistantHigher-OrderLogic}
Tobias Nipkow, Lawrence~C. Paulson, and Markus Wenzel.
\newblock \emph{Isabelle/{HOL} - {A} Proof Assistant for Higher-Order Logic},
  volume 2283 of \emph{{LNCS}}.
\newblock Springer, 2002.

\bibitem[Nogueira et~al.(2001)Nogueira, Balduccini, Gelfond, Watson, and
  Barry]{NBGWB01A-PrologDecisionSupportSystemSpaceShuttle}
Monica~L. Nogueira, Marcello Balduccini, Michael Gelfond, Richard Watson, and
  Matthew Barry.
\newblock An a-{Prolog} decision support system for the space shuttle.
\newblock In \emph{{PADL}}, volume 1990 of \emph{{LNCS}}, pages 169--183.
  Springer, 2001.

\bibitem[PBOxide 2025()]{url:pboxide}
PBOxide 2025.
\newblock {PBOxide}: {V}erifier for pseudo-{B}oolean proofs rewritten in rust.
\newblock \url{https://gitlab.com/MIAOresearch/software/pboxide}, 2025.

\bibitem[Py et~al.(2020)Py, Cherif, and
  Habet]{PCH20TowardsBridgingGapBetweenSATMax-SAT}
Matthieu Py, Mohamed~Sami Cherif, and Djamal Habet.
\newblock Towards bridging the gap between {SAT} and max-{SAT} refutations.
\newblock In \emph{{ICTAI}}, pages 137--144. {IEEE}, 2020.

\bibitem[Py et~al.(2022)Py, Cherif, and Habet]{PCH22ProofsCertificatesMax-SAT}
Matthieu Py, Mohamed~Sami Cherif, and Djamal Habet.
\newblock Proofs and certificates for {Max-SAT}.
\newblock \emph{J. Artif. Intell. Res.}, 75:\penalty0 1373--1400, 2022.

\bibitem[Silva and
  Sakallah(1999)]{SS99GRASPSearchAlgorithmPropositionalSatisfiability}
Jo{\~{a}}o P.~Marques Silva and Karem~A. Sakallah.
\newblock {GR{ASP}:} {A} search algorithm for propositional satisfiability.
\newblock \emph{{IEEE} Trans. Computers}, 48\penalty0 (5):\penalty0 506--521,
  1999.

\bibitem[Slind and Norrish(2008)]{SN08BriefOverviewHOL4}
Konrad Slind and Michael Norrish.
\newblock A brief overview of {HOL4}.
\newblock In \emph{TPHOLs}, volume 5170 of \emph{{LNCS}}, pages 28--32.
  Springer, 2008.

\bibitem[Vandesande(2023)]{msc/Vandesande23}
Dieter Vandesande.
\newblock Towards certified {MaxSAT} solving: Certified {MaxSAT} solving with
  {SAT} oracles and encodings of pseudo-{Boolean} constraints.
\newblock Master's thesis, Vrije Universiteit Brussel (VUB), 2023.

\bibitem[Vandesande et~al.(2022)Vandesande, {De Wulf}, and
  Bogaerts]{VDB22QMaxSATpbCertifiedMaxSATSolver}
Dieter Vandesande, Wolf {De Wulf}, and Bart Bogaerts.
\newblock {QMaxSATpb}: {A} certified {MaxSAT} solver.
\newblock In \emph{{LPNMR}}, volume 13416 of \emph{{LNCS}}, pages 429--442.
  Springer, 2022.

\bibitem[Vandesande et~al.(2025)Vandesande, Coll, and
  Bogaerts]{VCB26Experimental}
Dieter Vandesande, Jordi Coll, and Bart Bogaerts.
\newblock Experimental repository for ``{C}ertified branch-and-bound {M}ax{SAT}
  solving'' {(AAAI26)}.
\newblock \url{https://doi.org/10.5281/zenodo.7709687}, 2025.

\bibitem[Vandesande et~al.(2026)Vandesande, Coll, and
  Bogaerts]{VCB26CertifiedBranch-and-BoundMaxSATSolving}
Dieter Vandesande, Jordi Coll, and Bart Bogaerts.
\newblock Certified branch-and-bound {MaxSAT} solving.
\newblock In \emph{Proceedings of The 40th Annual {AAAI} Conference on
  Artificial Intelligence}, 2026.
\newblock accepted for publication.

\bibitem[VeriPB()]{url:veripb}
VeriPB.
\newblock {VeriPB}: {V}erifier for pseudo-{B}oolean proofs.
\newblock \url{https://gitlab.com/MIAOresearch/software/VeriPB}, 2024.

\bibitem[Wetzler et~al.(2014)Wetzler, Heule, and
  Hunt]{WHH14DRAT-trimEfficientCheckingTrimmingUsingExpressive}
Nathan Wetzler, Marijn Heule, and Warren~A. Hunt, Jr.
\newblock {DRAT-trim}: Efficient checking and trimming using expressive clausal
  proofs.
\newblock In \emph{{SAT}}, volume 8561 of \emph{{LNCS}}, pages 422--429.
  Springer, 2014.

\end{thebibliography}
\end{document}